\documentclass[11pt,letterpaper]{article}

\usepackage[margin=1in]{geometry}
\usepackage[T1]{fontenc}
\usepackage{lmodern}
\usepackage{microtype}

\usepackage{amsmath,amsfonts,amssymb}
\usepackage{amsthm}
\usepackage{mathtools}
\usepackage{physics}
\usepackage{orcidlink}
\usepackage{array}
\usepackage{multirow}
\usepackage{graphicx}
\usepackage{float}
\usepackage{textcomp}
\usepackage{url}
\usepackage{cite}
\usepackage{authblk}
\usepackage{mathdots}
\usepackage[font=normalsize]{caption}
\newenvironment{IEEEkeywords}%
{\par\medskip\noindent\textbf{Keywords: }}%
{\par}

\newcolumntype{C}[1]{>{\centering\arraybackslash}m{#1}}

\newcommand{\A}{\mathcal{A}}
\newcommand{\B}{\mathcal{B}}

\newcommand{\Id}{{\rm 1\hspace{-0.9mm}l}}

\newcommand{\C}{\ensuremath{\mathbb{C}}}

\newcommand{\R}{\ensuremath{\mathbb{R}}}

\newcommand{\proj}[1]{\ensuremath{\ketbra{#1}{#1}}}

\newtheorem{theorem}{Theorem}
\newtheorem{corollary}[theorem]{Corollary}

\newtheorem{conjecture}[theorem]{Conjecture}

\usepackage{enumitem}

\newtheorem{lemma}[theorem]{Lemma}
\theoremstyle{definition}
\newtheorem{definition}[theorem]{Definition}

\theoremstyle{remark}

\numberwithin{equation}{section}

\newcommand{\Mn}[1]{M_{#1}(\C)}
\newcommand{\Sn}[1]{\Omega_{#1}}
\DeclareMathOperator{\sw}{sw} 
\DeclareMathOperator{\swc}{\overline{sw}}

\newcommand{\sws}{\sw^{*}}

\DeclareMathOperator{\conv}{conv}
\DeclareMathOperator{\diag}{diag}
\DeclareMathOperator{\vspan}{span}

\newcommand{\psd}{\succeq 0}

\begin{document}

\title{Parallel quantum channel discrimination and numerical ranges in tensor product subspaces}

\author[1,2]{Adam B\'ilek\,\orcidlink{0000-0001-8380-5338}\thanks{Corresponding author: Adam B\'ilek (e-mail: adam.bilek@vsb.cz).}}
\author[2]{Paulina Lewandowska\,\orcidlink{0000-0003-1564-7782}}
\author[2]{Ryszard Kukulski\,\orcidlink{0000-0002-9171-1734}}
\affil[1]{Department of Applied Mathematics, Faculty of Electrical Engineering and Computer Science, VSB--Technical University of Ostrava, 17.~listopadu 2172/15, 708 33 Ostrava, Czech Republic}
\affil[2]{IT4Innovations, VSB--Technical University of Ostrava, 17.~listopadu 2172/15, 708 33 Ostrava, Czech Republic}
\date{}
\maketitle

\begin{abstract}
Quantum channel discrimination plays a crucial role in quantum information theory. Of particular interest is the case in which the channels can be discriminated perfectly.
In this work, we focus on the perfect quantum channel discrimination task in a parallel scheme. We develop
an SDP formulation combined with a bisection procedure to compute a quantum state for perfect discrimination in time linear in the number of copies. In addition, we obtain the minimal number of copies of quantum channels to achieve perfect discrimination.  Thanks to that, we  settle in the affirmative Conjecture~1 of Duan, Guo, Li and Li
\cite{duan2016parallel},
which characterizes the number of parallel uses needed to discriminate perfectly a distinguished family of operator subspaces.  All our results are obtained using the notion and basic properties of the numerical range. In particular, the key fact that we prove and use is that the minimal angle of the numerical range of a tensor product of matrix subspaces equals the sum of the minimal angles of the numerical ranges of the individual subspaces.
\end{abstract}

\begin{IEEEkeywords}
  parallel discrimination, numerical range, quantum channels, perfect discrimination
\end{IEEEkeywords}

\section{Introduction}\label{sec:intro}

One of the fundamental and well-studied problems in quantum information theory is the discrimination (or distinguishability) of quantum objects. In this problem, we are given a quantum object or many identical copies of it, which is known to be in one of two possible states, and we are given the classical descriptions of both states. The goal is to determine which description corresponds to the actual state of our object by interacting with it, that is, by designing a quantum experiment: a discrimination scheme.

The study of this problem began with the seminal works of Helstrom and Holevo~\cite{helstrom1969quantum, holevo1973bounds}.
Later, extensive studies were conducted for the discrimination of unitary operators \cite{acin2001statistical, duan2007entanglement, duan2008local}, for quantum measurements \cite{puchala2018strategies, krawiec2020discrimination}, or  process matrices \cite{lewandowska2023strategies}. 
This has also been extended to the case of discrimination between more than two channels \cite{chiribella2013identification}, as well as 
 quantum measurements    \cite{krawiec2020discrimination, puchala2021multiple}. 
In the work \cite{duan2009perfect} the authors formulated the necessary and sufficient conditions under which quantum channels can be perfectly discriminated. 

It turned out that the perfect distinguishability 
of the quantum channels has been connected to their Kraus decomposition \cite{watrous2018theory}. If 
$\Phi = (A_i)_i$ and $\Psi = (B_j)_j$ are given by their Kraus operators $A_i$ and $B_j$, respectively, then the subspace $\mathrm{span}\{B_j^\dagger A_i\}_{j,i}$ contains the information whenever  $p_{\text{succ}} = 1$. This means that there is a quantum 
state $\proj{\psi} $ such that $(\Phi  \otimes 
\Id)(\proj{\psi})$ and $(\Psi \otimes 
\Id) (\proj{\psi})$ are orthogonal, which is equivalent to 
$\bra{\psi}(B_j^\dagger A_i \otimes \Id)\ket{\psi} = 0$ for any $i,j$. 
Let $\ket{\psi}$ be a purification of a quantum state $\rho  $. Then, we 
arrive at the condition
\begin{equation}
p_{\text{succ}} = 1 \iff \exists_{\rho } \forall_{i,j} 
\tr(B_j^\dagger A_i \rho) = 0.
\end{equation}

Further works investigated the general scheme of multiple-shot discrimination, called adaptive strategies \cite{harrow2010adaptive, cope2018adaptive, pirandola2017ultimate}. The multiple-shot discrimination task has also entered the experimental phase in various scenarios
e.g. in \cite{liu2019distinguishing, debry2023experimental, bilek2026experimental}. 
The special case of that,  parallel  schemes were studied in \cite{duan2016parallel}.  
In this scheme, we discriminate between $N$ copies of $\Phi$ and $\Psi$, i.e.,
between $\Phi^{\otimes N}$ and $\Psi^{\otimes N}$. To analyse this case, we
associate the problem with the subspace
$\vspan\{B^\dagger_{j_1}A_{i_1} \otimes \cdots \otimes
B^\dagger_{j_N}A_{i_N}\}_{\mathbf{i},\mathbf{j}}$. The Authors~\cite{duan2016parallel} showed that the perfect distinguishability of two quantum  operations by a parallel scheme depends only on an operator subspace generated from their Choi--Kraus operators, and that conversely every operator subspace arises in this way from some pair of operations. For subspaces that are one-dimensional or Hermitian they proved that parallel distinguishability is equivalent to the absence of a positive definite operator in the subspace, and gave the corresponding optimal protocol; in general, however, this condition is only necessary. They also consider a family of operator subspaces that is neither one-dimensional nor Hermitian, and therefore falls outside the scope of their theorems. For this family, they formulate the following conjecture, whose validity they leave open.
\begin{conjecture}[Conjecture 1. in \cite{duan2016parallel}]\label{conj}
For operator space $S_\beta = \vspan\{\proj{0}+e^{i \beta} \proj{1}, \proj{1}+e^{i \beta} \proj{2}\}$ with $\beta \in [\pi/2, \pi]$, there is a
density operator in the orthogonal complement of $(S_\beta)^{\otimes N}$ if and only if $\beta \in [\pi/2 + \pi/(2N), \pi]$.
\end{conjecture}
In \cite{li2019note}, the Authors have affirmed the necessity part of the conjecture and established the sufficiency of the conjecture
for $N \le 10$ by presenting explicit non-trivial nonnegative solutions for the linear system.

In this work, we establish necessary and sufficient conditions for perfect parallel discrimination of quantum channels, together with a constructive method for finding the discriminator. Moreover, we compute the minimal number of copies $N$ required for perfect discrimination. Our results generalize those obtained in \cite{acin2001statistical} for unitary channels, as well as the later results for von Neumann measurements given in Corollary 1 of \cite{puchala2021multiple}. The construction relies on semidefinite programming \cite{vandenberghe1996sdp} and on the notion of numerical range \cite{numericalshadow}, whose properties we investigate in detail. In particular, we prove that the minimal angle of the numerical range of the tensor product of the matrix subspaces is equal to the sum of the minimal angles of the numerical ranges of those subspaces. It should also be noted that our results can be linked to the study of the product numerical range \cite{Puchala2011,Gawron2010} and the joint numerical range \cite{li2020joint,plaumann2021kippenhahn,keeler1997numerical,szymanski2018classification}.

This paper is organized as follows. Section \ref{sec:prelim}   presents the necessary mathematical preliminaries. Section \ref{sec:main} demonstrates the main results of the work.  Finally, concluding remarks are presented in the final Section \ref{sec:concl}.
In the appendices we provide the technical proof of the theorems presented in the work.

\section{Mathematical preliminaries}
\label{sec:prelim}

Throughout, $\Mn{n}$ denotes the algebra of complex matrices $n\times n$ and 
$\Id_{n}\in\Mn{n}$ the identity matrix. We will be using the notation of linear subspaces $\A, \B \subset \Mn{n}$ as well as the notation of tensor product of linear subspaces $\A \otimes \B = \vspan\{A \otimes B: A\in\A, B\in \B\}$. We say that the matrix $A$ is Hermitian if  $A=A^{\dagger}$ where $A^{\dagger}$  denotes the conjugate transpose of $A$. We write $A\psd$ to indicate that $A$ is
positive semidefinite.  Every
$A\in\Mn{n}$ decomposes as $A=\Re( A)+i\,\Im (A)$, where
$
  \Re (A)=\tfrac12\bigl(A+A^{\dagger}\bigr) $ and 
$  \Im (A)=\tfrac1{2i}\bigl(A-A^{\dagger}\bigr) $
are both Hermitian. 
  A matrix $\rho\in\Mn{n}$ is a \emph{quantum state} or \emph{density matrix}
 if $\rho\psd$ and $\tr\rho=1$. The set of quantum states in $\Mn{n}$ is
  denoted by $\Sn{n}$.
Finally, we need a one-parameter relaxation of the state space, in which
the two halves of a state are allowed to carry opposite phases.
Let $\alpha\in[0,\pi)$, we define $\Sn{n}(\alpha)$ as the set of all matrices
  $\rho\in\Mn{n}$ of the form
  \begin{equation}\label{def-states}
  \Sn{n}(\alpha) = \left\{ \rho \in \Mn{n}: \rho= e^{i\frac{\alpha}{2}}P + e^{-i\frac{\alpha}{2}}Q, \,\, P,Q\psd, \,\, \tr(P)+\tr(Q)=1 \right\}.
  \end{equation}
Observe that the definition above extends the notion of quantum states as $\Sn{n}=\Sn{n}(0)$.

We now recall the central tool for the discrimination task, the numerical range.   The \emph{numerical range} of $X\in\Mn{n}$ is
  $W(X) = \bigl\{\bra{v} X \ket{v}: \ket{v}\in\C^{n}, \braket{v}{v}=1 \bigr\}$.
  Alternatively, it is defined as $ W(X) = \bigl\{\tr(\rho X): \rho \in \Sn{n}\bigr\}$. In particular $0\in W(X)$ precisely when $\tr(\rho X)=0$ for some state $\rho$,
and this reformulation is the one that we use throughout. By the
Toeplitz--Hausdorff theorem, $W(X)$ is a compact convex subset of $\C$, and
$W(X)\subseteq\R$ exactly when $X=X^{\dagger}$.

For $X\in\Mn{n}$, we put $\Gamma_X = \{\theta: \Re(e^{i\theta}X)\psd, \theta \in [-\pi,\pi)\}$
and $|\Gamma_X|$ as volume of $\Gamma_X$. Additionally, we define
   $\sw(X) \in [0,\pi]$ as 
    \begin{equation}\label{eq:sw-def}
    \sw(X)\;=\;\begin{cases} \pi, & 0\in W(X), \\ \pi - |\Gamma_X|, & 0\notin W(X).
    \end{cases}
  \end{equation}
Additionally, for a given non-empty set  $\mathcal{M} \subset \Mn{n}$ we define 
\begin{equation}
    \sws(\mathcal{M}) =\inf\{\sw(X):  X\in\mathcal{M}\}.
\end{equation}

\section{Main results}\label{sec:main}
In this section, we present the main results of this paper. First, we define an SDP of our interest. Due to this SDP, we will be able to construct necessary and sufficient conditions for perfect discrimination of quantum channels. 

Let $\A \subset \Mn{n}$ be a  linear subspace, $(A_l)_{l=1}^k$ be a basis of $\A$ and $\alpha \in [0,\pi)$. We define $D_\alpha(A_1,\ldots,A_k)$ as the output of the following optimization program:
    \begin{equation}\label{sdp-main}
        \begin{split}
        & \textbf{\underline{SDP formulation for computing $D_\alpha(A_1,\ldots,A_k)$}}\\
            \text{minimize:}\quad &
            \sum_{l=1}^k \left| \tr(\rho A_l)\right|
            \\[2mm]
            \text{subject to:}\quad & \rho \in \Sn{n}(\alpha)
        \end{split}
    \end{equation}
In Appendix~\ref{app-sdp-spec} we provide a ready to implement version of the Program~\ref{sdp-main}. Now, the main result of our work can be summarized as the following theorem.
\begin{theorem}\label{thm-tool}
    Let $\A \subset \Mn{n}$ be a linear subspace, and $\alpha \in [0,\pi)$. The following conditions are equivalent:
    \begin{enumerate}
        \item $\sws(\A) + \alpha \ge \pi$,
        \item $D_\alpha(A_1,\ldots,A_k) = 0$ for any basis $(A_l)_{l=1}^k$ of $\A$,
        \item There is $\rho \in \Sn{n}(\alpha)$ such that $\tr(\rho A)=0$ for all $A \in \A$.
    \end{enumerate}
\end{theorem}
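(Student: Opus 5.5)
We first prove (2)$\iff$(3) by a compactness argument, and then (1)$\iff$(3) by separating the origin from a convex set and reading the separating functional as an element $X\in\A$ whose numerical range sits in a sector.

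\emph{Step 1: (2)$\iff$(3).} We will note that $\Sn{n}(\alpha)$ is compact, since it is the continuous image of the compact set $\{(P,Q): P,Q\psd,\ \tr P+\tr Q=1\}$. The objective of Program~\eqref{sdp-main} is continuous, so its minimum is attained. For any basis, $D_\alpha=0$ then holds iff some $\rho\in\Sn{n}(\alpha)$ satisfies $\tr(\rho A_l)=0$ for all $l$. By linearity this is the same as $\tr(\rho A)=0$ for all $A\in\A$, and this condition does not depend on the chosen basis.

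\emph{Step 2: reformulating the failure of (3).} The set $\Sn{n}(\alpha)$ is convex: a convex combination of $(P,Q)$ and $(P',Q')$ is again admissible. Hence $K=\{(\tr(\rho A_l))_l:\rho\in\Sn{n}(\alpha)\}\subset\C^k$ is compact and convex. If $0\notin K$, strict separation in $\C^k\cong\R^{2k}$ gives coefficients $c_l$ with $\Re\sum_l c_l\tr(\rho A_l)>0$ for all $\rho\in\Sn{n}(\alpha)$. Put $X=\sum_l c_lA_l\in\A$; then $\Re\tr(\rho X)>0$ on $\Sn{n}(\alpha)$. The converse implication is immediate, so (3) fails iff such an $X\in\A$ exists.

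Writing $\rho=e^{i\alpha/2}P+e^{-i\alpha/2}Q$, we get $\Re\tr(\rho X)=\tr\bigl(P\,\Re(e^{i\alpha/2}X)\bigr)+\tr\bigl(Q\,\Re(e^{-i\alpha/2}X)\bigr)$. Testing with $Q=0$ and with $P=0$ shows that positivity on $\Sn{n}(\alpha)$ is equivalent to $\Re(e^{\pm i\alpha/2}X)$ both being positive definite. Equivalently, $W(X)$ lies in the open sector $\{z\neq0:|\arg z|<(\pi-\alpha)/2\}$ of opening $\pi-\alpha$.

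\emph{Step 3 (main obstacle): identifying $\sw(X)$ geometrically.} For $0\notin W(X)$ we will show that $\sw(X)$ is the opening angle of the smallest closed sector with apex $0$ containing $W(X)$. The key fact is $\Re(e^{i\theta}X)\psd$ iff $W(e^{i\theta}X)$ lies in the closed right half plane. So $\Gamma_X$ is the set of directions of closed half planes through $0$ containing $W(X)$. Since $W(X)$ is compact, convex, and avoids $0$, it lies in a minimal sector of angle $\varphi<\pi$, and the admissible half-plane directions form a closed arc of length $\pi-\varphi$. Thus $\sw(X)=\varphi$. Some care is needed to check the arc-length bookkeeping on $[-\pi,\pi)$; this is where I expect the main work.

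\emph{Step 4: conclusion.} Since $W(X)$ is compact, it lies in the open sector of opening $\pi-\alpha$ from Step~2 iff $0\notin W(X)$ and its minimal sector has angle $<\pi-\alpha$ with the right orientation. Multiplying $X$ by a unimodular scalar keeps it in $\A$ and rotates $W(X)$, so the orientation is irrelevant. Hence (3) fails iff some $X\in\A$ has $\sw(X)<\pi-\alpha$. An infimum is $<\pi-\alpha$ iff some element is, so this holds iff $\sws(\A)<\pi-\alpha$, i.e.\ iff (1) fails. Here $X=0$ and any $X$ with $0\in W(X)$ have $\sw=\pi$, so they never violate (1), consistent with the definition.
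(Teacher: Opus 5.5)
Your proof is correct. For (2)$\iff$(3) you match the paper, which calls that equivalence obvious, but for (1)$\iff$(3) you take a genuinely different route.

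\textbf{The paper's route.} It works one matrix at a time. It invokes the Horn--Sergeichuk canonical form for $*$-congruence to write any $A$ with $0\notin W(A)$ as $GDG^\dagger$ with $D$ diagonal and unimodular. From this normal form it derives two facts. First, any $X\neq0$ with $\tr(XA)=0$ satisfies $\swc(X)\ge\pi-\sw(A)$; together with $\swc(\rho)\le\alpha$ on $\Sn{n}(\alpha)$, this gives (3)$\Rightarrow$(1). Second, it explicitly constructs a $\rho\in\Sn{n}(\alpha)$ annihilating a single $A$ whenever $\alpha\ge\pi-\sw(A)$. A separate separation argument (Lemma~\ref{lemma-equiv-exists}) then swaps ``for every $A$ there is $\rho$'' into ``there is $\rho$ for every $A$''.

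\textbf{Your route.} You apply separation once, to the whole subspace, and read the separating functional as an element $X\in\A$. Infeasibility of (3) then becomes exactly the existence of $X\in\A$ with $\Re(e^{\pm i\alpha/2}X)\succ0$. This is the dual-cone statement that $W(X)$ lies in an open sector of opening $\pi-\alpha$. You then identify $\sw(X)$ with the angle of the smallest closed sector containing $W(X)$. Your Step~3, which you flagged as the main work, is routine and correct as sketched: the admissible $\theta$ form a single closed arc of length $\pi-\varphi\le\pi$, so there is no wrap-around issue on $[-\pi,\pi)$. This turns the sector condition into $\sws(\A)<\pi-\alpha$, proving both directions at once by contraposition.

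\textbf{What each buys.} Your argument is more elementary and self-contained. It needs only Toeplitz--Hausdorff convexity, compactness and finite-dimensional separation, with no canonical forms. It also makes explicit the conic duality between $\Sn{n}(\alpha)$ and sector-confined numerical ranges, which is the real content of the theorem. The paper's route yields explicit single-matrix witnesses and the inequality $\swc(X)\ge\pi-\sw(A)$. Its diagonal normal form is also reused later: in Lemma~\ref{lem-sw-prod}, and in deducing $\swc(\rho_\A)=\alpha_\A$ in the proof of Theorem~\ref{thm-tensor}. Your argument does not supply these by-products directly.
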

The proof of Theorem~\ref{thm-tool} can be found in Appendix~\ref{app-proof-thm-tool}. Thanks to this theorem for any linear subspace $\A \subset \Mn{n}$, we are able to efficiently estimate $\sws(\A) \in [0,\pi]$ based on the bisection method:
\begin{itemize}
    \item if $D_\alpha(A_1,\ldots,A_k) = 0$ for some $\alpha$, then $\sws(\A) \ge \pi - \alpha$,
    \item if  $D_\alpha(A_1,\ldots,A_k) > 0$ for some $\alpha$, then $\sws(\A) < \pi - \alpha$.
\end{itemize}
Eventually, we can reach $\sws(\A)$ with arbitrary precision. Now, let us define
\begin{equation}
    \alpha_\A = \pi - \sws(\A).
\end{equation} 
Moreover, as long as $\sws(\A) > 0$, we also define $\rho_\A$ of the form 
\begin{equation}
    \rho_\A = e^{i \frac{\alpha_\A}{2}} P_\A +  e^{-i \frac{\alpha_\A}{2}} Q_\A \in \Sn{n}(\alpha_\A) \text{ such that } \tr(\rho_\A A)=0 \text{ for any } A \in \A.
\end{equation}
Otherwise, if $\sws(\A) = 0$, we define $\rho_\A = \Id_n/n$.

In the following theorem, we present the properties of the function
$\sw^*(\cdot)$ and, as a consequence, derive a formula for the discriminator. 

\begin{theorem}\label{thm-tensor}
    Let $\A \subset \Mn{n}$ and $\B \subset \Mn{m}$ be any linear subspaces. It holds that
    \begin{equation}
        \sws \left(\A \otimes \B \right) = \min\left\{ \sws(\A) + \sws(\B), \pi \right\}.
    \end{equation}
    Moreover, we have the following construction of $\rho_{\A \otimes \B} \in \Sn{nm}(\alpha_{\A \otimes \B})$:
    \begin{itemize}
        \item If $\sws(\A) = 0$ or $\sws(\B) = 0$ we take
        \begin{equation}
            \rho_{\A \otimes \B} = \rho_\A \otimes \rho_\B.
        \end{equation}
        \item If $\sws(\A) > 0$, $\sws(\B) > 0$ and $\sws(\A) + \sws(\B) < \pi$ we take
        \begin{equation}
            \rho_{\A \otimes \B} \propto e^{i\frac{\alpha_\A + \alpha_\B - \pi}{2}} P_\A \otimes P_\B + e^{-i\frac{\alpha_\A + \alpha_\B - \pi}{2}} Q_\A \otimes Q_\B.
        \end{equation}
        \item If $\sws(\A) > 0$, $\sws(\B) > 0$ and $\sws(\A) + \sws(\B) \ge \pi$ we take
        \begin{equation}
            \rho_{\A \otimes \B} \propto \rho_\A \otimes \rho_\B^\dagger + \rho_\A^\dagger \otimes \rho_\B.
        \end{equation}
    \end{itemize}
\end{theorem}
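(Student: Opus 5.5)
The plan is to prove the two inequalities $\sws(\A\otimes\B)\ge \min\{\sws(\A)+\sws(\B),\pi\}$ and $\sws(\A\otimes\B)\le \sws(\A)+\sws(\B)$ separately; the second one together with the trivial bound $\sws\le\pi$ gives equality. The lower bound will come from Theorem~\ref{thm-tool}, direction (3)$\Rightarrow$(1). In each of the three cases I exhibit the stated $\rho_{\A\otimes\B}$, check that it lies in $\Sn{nm}(\alpha)$ with $\alpha=\max\{\pi-\sws(\A)-\sws(\B),0\}$, and check that it annihilates $\A\otimes\B$. By linearity, annihilation only needs checking on product elements $A\otimes B$.

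\emph{Case $\sws(\A)=0$ (or symmetrically $\sws(\B)=0$).} Here $\rho_\A=\Id_n/n$ is a genuine state, so $\rho_\A\otimes\rho_\B=e^{i\alpha_\B/2}(\rho_\A\otimes P_\B)+e^{-i\alpha_\B/2}(\rho_\A\otimes Q_\B)\in\Sn{nm}(\alpha_\B)$. Moreover $\tr((\rho_\A\otimes\rho_\B)(A\otimes B))=\tr(\rho_\A A)\tr(\rho_\B B)=0$, since $\tr(\rho_\B B)=0$. This gives $\sws(\A\otimes\B)\ge\sws(\B)$. If both vanish there is nothing to prove.

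\emph{Case $0<\sws(\A)+\sws(\B)<\pi$.} Put $p_A=\tr(P_\A A)$ and $q_A=\tr(Q_\A A)$. The relation $\tr(\rho_\A A)=0$ reads $p_A=-e^{-i\alpha_\A}q_A$, and similarly for $B$. With $\gamma=(\alpha_\A+\alpha_\B-\pi)/2$ one computes
\[
e^{i\gamma}p_Ap_B+e^{-i\gamma}q_Aq_B=q_Aq_B\bigl(e^{-i(\alpha_\A+\alpha_\B+\pi)/2}+e^{-i(\alpha_\A+\alpha_\B-\pi)/2}\bigr)=0.
\]
Since $2\gamma=\pi-\sws(\A)-\sws(\B)\in(0,\pi)$ and $P_\A\otimes P_\B$, $Q_\A\otimes Q_\B$ are positive semidefinite, normalizing by the total trace puts the operator in $\Sn{nm}(2\gamma)$. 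The trace is nonzero unless both products vanish, and I would dismiss that degenerate possibility by a short argument.

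\emph{Case $\sws(\A)+\sws(\B)\ge\pi$ with both positive.} Expanding $\rho_\A\otimes\rho_\B^\dagger+\rho_\A^\dagger\otimes\rho_\B$ gives
\[
2\cos\tfrac{\alpha_\A-\alpha_\B}{2}\,(P_\A\otimes P_\B+Q_\A\otimes Q_\B)+2\cos\tfrac{\alpha_\A+\alpha_\B}{2}\,(P_\A\otimes Q_\B+Q_\A\otimes P_\B).
\]
Both coefficients are nonnegative because $\alpha_\A,\alpha_\B<\pi$ and $\alpha_\A+\alpha_\B\le\pi$, so the operator is positive semidefinite. Annihilation holds term by term: $\tr(\rho_\A A)\tr(\rho_\B^\dagger B)=0$ and $\tr(\rho_\A^\dagger A)\tr(\rho_\B B)=0$. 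Here I only use that $\rho_\A$ kills $\A$ and $\rho_\B$ kills $\B$; the adjoints need not kill anything. Theorem~\ref{thm-tool} with $\alpha=0$ then yields $\sws=\pi$.

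The delicate point is that this operator is nonzero, which is needed to normalize it. The first coefficient is strictly positive. When $\alpha_\A+\alpha_\B=\pi$ the second coefficient vanishes, so nonvanishing then needs $P_\A\otimes P_\B+Q_\A\otimes Q_\B\neq 0$. If that term happened to vanish, I would instead fall back to the Case 2 construction with $\gamma=0$.

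For the upper bound, fix $X\in\A$ and $Y\in\B$ with $s=\sw(X)$ and $t=\sw(Y)$ close to the infima and $s+t<\pi$. The key lemma is a characterization: for $s<\pi$, $\sw(X)\le s$ iff, after a rotation $e^{i\phi}X$, one can write $e^{i\phi}X=e^{is/2}R+e^{-is/2}T$ with $R,T\psd$. This is proved by solving the $2\times2$ real-linear system for $R,T$ from the two boundary Hermitian parts $\Re(e^{i\theta}X)$ at the endpoints of $\Gamma_X$, which are positive semidefinite; the case $s=0$ needs a small limiting/degenerate adjustment. Then
\[
e^{i(\phi+\psi)}X\otimes Y=\sum e^{i(\pm s\pm t)/2}(\text{psd}\otimes\text{psd}),
\]
and all four phases lie in $[-(s+t)/2,(s+t)/2]$, so $W(X\otimes Y)$ sits in a closed sector of width $s+t<\pi$. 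Hence $\sw(X\otimes Y)\le s+t$. I expect the main obstacle to be this characterization lemma, in particular the boundary and degenerate cases. The normalization/nonvanishing issue in the third case of the construction is a secondary one.
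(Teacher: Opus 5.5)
Your lower-bound constructions are the paper's own: the same three cases and the same appeal to Theorem~\ref{thm-tool}, (3)$\Rightarrow$(1). Your explicit expansion of $\rho_\A\otimes\rho_\B^\dagger+\rho_\A^\dagger\otimes\rho_\B$ proves positivity more cleanly than the paper's one-line remark.

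For the upper bound you take a different route. The paper uses the Horn--Sergeichuk canonical form for $*$-congruence to reduce $A\otimes B$ to a diagonal unitary, and computes $\sw(A\otimes B)=\min\{\sw(A)+\sw(B),\pi\}$ exactly (Lemma~\ref{lem-sw-prod}). You instead split a rotation of $X$ as $e^{is/2}R+e^{-is/2}T$ with $R,T\psd$ and then tensor. This is more elementary and yields exactly the inequality you need.

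Two steps, however, do not go through as written. First, the inference \emph{$W(X\otimes Y)$ lies in a closed sector of width $s+t<\pi$, hence $\sw(X\otimes Y)\le s+t$} does not follow. A closed sector contains the origin, and $\sw$ jumps to $\pi$ as soon as $0\in W$: for example, $W=[0,1]$ lies in a sector of width $0$ yet has $\sw=\pi$. For the same reason, the reverse direction of your characterization lemma fails as stated (take $X=0$). The missing point is that $\sw(X)<\pi$ gives $0\notin W(X)$, and in your decomposition this means $R+T\succ 0$. Indeed, since $s<\pi$, the condition $\langle v|e^{i\phi}X|v\rangle=0$ forces $\langle v|R|v\rangle=\langle v|T|v\rangle=0$. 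Likewise $R'+T'\succ 0$ for $Y$. All four phases lie in $[-\tfrac{s+t}{2},\tfrac{s+t}{2}]$ with $\tfrac{s+t}{2}<\tfrac{\pi}{2}$. Hence for every unit $w$ you get $\Re\langle w|e^{i(\phi+\psi)}X\otimes Y|w\rangle\ge\cos\tfrac{s+t}{2}\,\langle w|(R+T)\otimes(R'+T')|w\rangle>0$, which keeps $0$ out of $W(X\otimes Y)$ and makes your bound valid.

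Second, your fallback in the third case is vacuous. The Case~2 operator with $\gamma=0$ is $P_\A\otimes P_\B+Q_\A\otimes Q_\B$, which is exactly the term you are assuming to vanish. What you need instead is that $P_\A,Q_\A\neq 0$ whenever $0<\sws(\A)<\pi$; this also supplies the short argument you defer in Case~2. Suppose, say, $P_\A=0$. Then $\rho_\A=e^{-i\alpha_\A/2}Q_\A$ with $Q_\A\in\Sn{n}$ annihilating $\A$, and Theorem~\ref{thm-tool} with $\alpha=0$ forces $\sws(\A)=\pi$. The paper gets the same fact from $\swc(\rho_\A)=\alpha_\A$ via Lemmas~\ref{lem-prop-states} and~\ref{lemma4}. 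In the third case, the degenerate situation requires $\alpha_\A+\alpha_\B=\pi$, i.e.\ $\sws(\A)+\sws(\B)=\pi$ with both summands positive, hence both in $(0,\pi)$. So it never occurs, and no fallback is needed.
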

The proof of Theorem~\ref{thm-tensor} can be seen  in Appendix~\ref{app-proof-thm-tensor}. The results of this theorem can be extended naturally to tensor product of $N$ subspaces.  
\begin{corollary}\label{cor-many}
  Let $\A_i \subset \Mn{n_i}$ be linear subspaces for $i=1,\ldots,N$. Then, it holds
  \begin{equation}
      \sws \left(\bigotimes_{i=1}^N \A_i \right) = \min\left\{ \sum_{i=1}^N \sws(\A_i), \pi \right\}.
  \end{equation}   
  In particular, $\sw \left(\bigotimes_{i=1}^N A_i \right) = \min\left\{ \sum_{i=1}^N \sw(A_i), \pi \right\}$ for any sequence of matrices $A_i$. 
\end{corollary}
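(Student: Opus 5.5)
Induction on $N$, using Theorem~\ref{thm-tensor} as the inductive step. For $N=1$ the claim reads $\sws(\A_1)=\min\{\sws(\A_1),\pi\}$, which holds because $\sw$ takes values in $[0,\pi]$, so $\sws(\A_1)\le\pi$. For $N\ge 2$ write $\bigotimes_{i=1}^N\A_i=\bigl(\bigotimes_{i=1}^{N-1}\A_i\bigr)\otimes\A_N$. Here I use the associativity of the tensor product of subspaces, $\vspan\{X\otimes A_N\}$ with $X$ ranging over $\vspan\{A_1\otimes\cdots\otimes A_{N-1}\}$, which equals $\vspan\{A_1\otimes\cdots\otimes A_N\}$ by bilinearity. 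Theorem~\ref{thm-tensor} then gives
\begin{equation*}
\sws\Bigl(\bigotimes_{i=1}^N\A_i\Bigr)=\min\Bigl\{\sws\Bigl(\bigotimes_{i=1}^{N-1}\A_i\Bigr)+\sws(\A_N),\pi\Bigr\},
\end{equation*}
and the induction hypothesis lets me substitute $\min\{\sum_{i<N}\sws(\A_i),\pi\}$ for the first term.

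The only remaining point is the elementary identity $\min\{\min\{s,\pi\}+t,\pi\}=\min\{s+t,\pi\}$ for $s,t\ge 0$. If $s\le\pi$, both sides coincide. If $s>\pi$, the left side is $\min\{\pi+t,\pi\}=\pi$ and the right side is also $\pi$, because $t\ge 0$. This closes the induction.

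For the statement about individual matrices $A_i$, I would apply the first part to the one-dimensional subspaces $\A_i=\vspan\{A_i\}$. The key observation is that $\sw$ is invariant under nonzero scalar multiplication: $\sws(\vspan\{A\})=\sw(A)$ whenever $A\ne 0$, with $0$ excluded or handled separately. To check this for $c=re^{i\varphi}$, note first that $0\in W(cA)$ if and only if $0\in W(A)$. Next, $\Gamma_{cA}=\Gamma_A-\varphi$ modulo $2\pi$, because $\Re(e^{i\theta}rA)\psd$ if and only if $\Re(e^{i\theta}A)\psd$. A rotation on the circle preserves length, so $|\Gamma_{cA}|=|\Gamma_A|$.

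The expected obstacle is the zero matrix. For $X=0$ we have $0\in W(0)$, so $\sw(0)=\pi$. Since the infimum in $\sws(\vspan\{A\})$ also ranges over $X=0$, the nonzero multiples give $\sw(A)$ and the whole infimum is therefore $\min\{\sw(A),\pi\}=\sw(A)$. The one-dimensional case is therefore consistent. It remains to check that $\vspan\{A_1\}\otimes\cdots\otimes\vspan\{A_N\}=\vspan\{A_1\otimes\cdots\otimes A_N\}$, which holds by multilinearity. Combining this with the first part gives $\sw(\bigotimes_i A_i)=\min\{\sum_i\sw(A_i),\pi\}$. If some $A_i=0$, both sides equal $\pi$, since $\sw(0)=\pi$ forces the sum to be at least $\pi$.
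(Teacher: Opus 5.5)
Your proof is correct and is essentially the argument the paper intends when it says Theorem~\ref{thm-tensor} ``can be extended naturally'' to $N$ factors: induction on $N$, using associativity of $\otimes$ on subspaces and the identity $\min\{\min\{s,\pi\}+t,\pi\}=\min\{s+t,\pi\}$. Your reduction of the matrix statement to one-dimensional subspaces is also fine, and you could equally iterate Lemma~\ref{lem-sw-prod} directly with the same identity.
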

The construction of $\rho_{\A_1 \otimes \cdots \otimes \A_N}$ can be delivered iteratively from Theorem~\ref{thm-tensor}. One of the main consequences of Corollary~\ref{cor-many} is a positive answer to Conjecture 1~\cite{duan2016parallel}.
\begin{corollary}\label{cor-conj}
    Conjecture~\ref{conj} is true.
\end{corollary}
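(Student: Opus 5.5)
Our plan is to reduce Conjecture~\ref{conj} to a computation of $\sws(S_\beta)$ and then apply Corollary~\ref{cor-many} together with Theorem~\ref{thm-tool} at $\alpha=0$. The orthogonal complement of $(S_\beta)^{\otimes N}$ with respect to the Hilbert--Schmidt inner product contains a density operator iff there is $\rho\in\Sn{3^N}=\Sn{3^N}(0)$ with $\tr(\rho^\dagger X)=\tr(\rho X)=0$ for all $X\in S_\beta^{\otimes N}$. This is condition~3 of Theorem~\ref{thm-tool} with $\alpha=0$, so it is equivalent to $\sws(S_\beta^{\otimes N})\ge\pi$. By Corollary~\ref{cor-many} this holds iff $N\,\sws(S_\beta)\ge\pi$. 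The conjecture therefore reduces to the identity
\begin{equation*}
\sws(S_\beta)=2\beta-\pi \qquad\text{for }\beta\in[\pi/2,\pi].
\end{equation*}
Indeed, $N(2\beta-\pi)\ge\pi$ is exactly $\beta\ge\pi/2+\pi/(2N)$.

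To compute $\sws(S_\beta)$ we note that every element $X=a(\proj0+e^{i\beta}\proj1)+b(\proj1+e^{i\beta}\proj2)$ is diagonal, with eigenvalues $a$, $ae^{i\beta}+b$ and $be^{i\beta}$. For a normal (here diagonal) matrix, $W(X)$ is the convex hull of its eigenvalues. Moreover, $\Re(e^{i\theta}X)\psd$ iff every eigenvalue lies in the closed half-plane $\{z:\Re(e^{i\theta}z)\ge0\}$. Hence $\sw(X)$ is the opening angle of the smallest closed sector with apex $0$ containing the eigenvalues, and it equals $\pi$ if that angle is at least $\pi$. (One checks that $\pi-|\Gamma_X|$ is exactly this sector angle.) For $X=0$ we have $0\in W(X)$, so $\sw(0)=\pi$.

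For the lower bound, suppose $a,b\neq0$ and the eigenvalues span a sector of angle less than $\pi$. The rays through $a$ and $be^{i\beta}$ then lie within that sector. The third eigenvalue $ae^{i\beta}+b$ is a positive combination of $ae^{i\beta}$ and $b$, i.e.\ of the rotations by $\beta$ of $a$ and of $be^{i\beta}$ rotated by $-\beta$. Writing $a=re^{i\phi}$ and $b=se^{i\psi}$, the three arguments are $\phi$, $\psi+\beta$, and some argument between $\phi+\beta$ and $\psi$ on the shorter arc. A case analysis on $\psi-\phi$ should show that the spread is at least $2\beta-\pi$, with equality attained e.g.\ at $a=1$, $b=e^{i(\pi-\beta)}$ up to scaling. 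In that case the eigenvalues are $1$, $e^{i\beta}+e^{i(\pi-\beta)}=2i\sin\beta$ and $e^{i\pi}=-1$. That gives spread $\pi$, so this is not the right point and I would instead optimize over $\psi$ directly.

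The degenerate cases are simple. If $b=0$ the eigenvalues are $a$, $ae^{i\beta}$ and $0$, and $0\in W(X)$ gives $\sw=\pi$; the case $a=0$ is similar. This extremal computation is the main obstacle, and I expect the minimizing configuration to occur when $ae^{i\beta}+b$ lies on the bisector.

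Once $\sws(S_\beta)=2\beta-\pi$ is established (interpreting it as $0$ at $\beta=\pi/2$), both directions of the conjecture follow from the reduction above. The construction in Theorem~\ref{thm-tensor} then also yields an explicit discriminating state.
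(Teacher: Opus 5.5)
Your reduction matches the paper's: condition~3 of Theorem~\ref{thm-tool} at $\alpha=0$, combined with Corollary~\ref{cor-many}, turns the conjecture into the identity $\sws(S_\beta)=2\beta-\pi$. Your description of $\sw$ for diagonal matrices, as the opening of the smallest sector containing the eigenvalues, is also correct. The gap is that this identity is never proved, in either direction, and it is the entire content of the corollary once the general machinery is available. The lower bound stops at ``a case analysis on $\psi-\phi$ should show''. Your only candidate for the upper bound gives spread $\pi$, as you note yourself. Your plan of searching for an exact minimizer also cannot work: for $\beta\in[\pi/2,\pi)$ the infimum is not attained (see below), so the upper bound has to come from a limiting family.

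Both directions follow from the relation you already wrote down. With eigenvalues $\lambda_1=a$, $\lambda_2=ae^{i\beta}+b$, $\lambda_3=be^{i\beta}$, the identity $\lambda_2=e^{i\beta}\lambda_1+e^{-i\beta}\lambda_3$ can be rewritten as $e^{-i(\pi-\beta)}\lambda_1+\lambda_2+e^{i(\pi-\beta)}\lambda_3=0$.

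Suppose $\sw(X)=\delta<\pi$. Then the $\lambda_k$ are nonzero and lie in a closed sector of opening $\delta$, so the three summands lie in a closed sector of opening $\delta+2(\pi-\beta)$. A vanishing sum of nonzero vectors forces $\delta+2(\pi-\beta)\ge\pi$, that is, $\delta\ge 2\beta-\pi$. In the equality case that sector is a closed half-plane, so all three summands would have to lie on its boundary line. This is impossible for $\lambda_2\neq 0$, which lies strictly inside when $\beta<\pi$; hence the infimum is not attained.

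The paper encodes the same lower bound differently. For $\beta>\pi/2$ it uses the certificate $\rho\propto\diag(e^{-i(\pi-\beta)},1,e^{i(\pi-\beta)})\in\Sn{3}(2(\pi-\beta))$, which annihilates $D_1$ and $D_2$, and then invokes Theorem~\ref{thm-tool}.

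For the upper bound the paper takes $a=1$, $b=e^{i(\beta+\pi+\epsilon)}$. The eigenvalues are then $1$, $2\sin(\epsilon/2)\,e^{i(\beta-\pi/2+\epsilon/2)}$ and $e^{i(2\beta-\pi+\epsilon)}$. The middle one does lie on the bisector, as you guessed, but its modulus tends to $0$. This gives $\sw(X_\epsilon)=2\beta-\pi+\epsilon\to 2\beta-\pi$.
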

The proof of Corollary~\ref{cor-conj} can be seen in Appendix~\ref{app-proof-cor-conj}. The result of Theorem~\ref{thm-tensor} can also be expressed in the language
of numerical ranges in the case $\sws(\A\otimes \B) = \pi$ presented in Corollary~\ref{cor-main}. 
\begin{corollary}\label{cor-main}
Let $\A\subset\Mn{n}$ and $\B\subset\Mn{m}$ be linear subspaces. Then
  the following are equivalent:
  \begin{enumerate}
    \item $0\in W(A\otimes B)$ for all $A\in\A$ and
          $B\in\B$.
    \item $0\in W(X)$ for all
          $X\in \A\otimes\B$.
  \end{enumerate}
\end{corollary}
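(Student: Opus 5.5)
The plan is to translate both conditions into statements about $\sws$ and then apply Theorem~\ref{thm-tensor} together with Corollary~\ref{cor-many} for $N=2$. By definition \eqref{eq:sw-def}, for a single matrix $X$ we have $0\in W(X)$ if and only if $\sw(X)=\pi$. Since $\sw(X)\le\pi$ always, condition 2 is equivalent to
\[
\sws(\A\otimes\B)=\inf\{\sw(X): X\in\A\otimes\B\}=\pi .
\]
Likewise, condition 1 says that $\sw(A\otimes B)=\pi$ for every $A\in\A$ and $B\in\B$.

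\emph{Step 1 (2 $\Rightarrow$ 1).} This direction is immediate. Every elementary tensor $A\otimes B$ with $A\in\A$, $B\in\B$ lies in $\A\otimes\B$, so condition 2 applies to it directly.

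\emph{Step 2 (1 $\Rightarrow$ 2).} By the ``in particular'' part of Corollary~\ref{cor-many} with $N=2$, we have $\sw(A\otimes B)=\min\{\sw(A)+\sw(B),\pi\}$. Condition 1 therefore gives $\sw(A)+\sw(B)\ge\pi$ for all $A\in\A$, $B\in\B$.

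Now take the infimum over $A$ and then over $B$. These two infima are independent, so the infimum of the sum equals the sum of the infima, and we obtain $\sws(\A)+\sws(\B)\ge\pi$. Theorem~\ref{thm-tensor} then yields $\sws(\A\otimes\B)=\min\{\sws(\A)+\sws(\B),\pi\}=\pi$. This means $\sw(X)=\pi$, i.e. $0\in W(X)$, for every $X\in\A\otimes\B$.

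\emph{Main obstacle.} The only point needing care is the degenerate subspace $\{0\}$, for which $\sw(0)=\pi$ because $0\in W(0)$. This case is consistent with everything above, so no separate treatment is needed.

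The substance of the result therefore lies entirely in Theorem~\ref{thm-tensor}, which passes from elementary tensors to arbitrary (entangled) elements of $\A\otimes\B$. If one wanted an explicit witness, the third bullet of Theorem~\ref{thm-tensor} provides it. The state $\rho\propto\rho_\A\otimes\rho_\B^\dagger+\rho_\A^\dagger\otimes\rho_\B$, with $\rho_\A\in\Sn{n}(\alpha_\A)$ and $\rho_\B\in\Sn{m}(\alpha_\B)$ annihilating $\A$ and $\B$ respectively, is a genuine state in $\Sn{nm}$ whenever $\alpha_\A+\alpha_\B\le\pi$. It is orthogonal to all of $\A\otimes\B$, which directly exhibits $0\in W(X)$ for every $X\in\A\otimes\B$.
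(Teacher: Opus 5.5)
Your proof is correct and follows the route the paper intends. Condition 1 becomes $\sws(\A)+\sws(\B)\ge\pi$ via Lemma~\ref{lem-sw-prod} (the ``in particular'' of Corollary~\ref{cor-many}), and Theorem~\ref{thm-tensor} turns this into $\sws(\A\otimes\B)=\pi$, which is condition 2.

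Two small precisions. First, the implication $\sw(X)=\pi\Rightarrow 0\in W(X)$ is not purely definitional. It uses compactness of $W(X)$: if $0\notin W(X)$, a strictly separating line gives an open arc inside $\Gamma_X$, so $|\Gamma_X|>0$. Second, your closing witness $\rho_\A\otimes\rho_\B^\dagger+\rho_\A^\dagger\otimes\rho_\B$ is only available when $\sws(\A),\sws(\B)>0$. If, say, $\sws(\B)=0$, then $\sws(\A)=\pi$ and the witness is $\rho_\A\otimes\Id_m/m$ from the first bullet of Theorem~\ref{thm-tensor}.
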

Finally, we deliver the minimal number of  copies of quantum channels to achieve perfect discrimination. 
\begin{theorem}\label{thm-discr}
    Let $\Phi,\Psi: \Mn{d} \to \Mn{s}$ be the quantum channels given in the Kraus representation as  $\Phi = (A_k)_{k=1}^{r_{\Phi}}$ and $\Psi = (B_l)_{l=1}^{r_{\Psi}}$. Let us define the linear subspace $\A = \vspan\{B^\dagger_l A_k\}_{k,l}$. Then, the quantum channels $\Phi$ and $\Psi$ are perfectly discriminable in the parallel scheme if and only if $\sws(\A) > 0$. Moreover, the minimal number of copies $N_0$ required for perfect discrimination satisfies
    \begin{equation}
        N_0 = \left\lceil \frac{\pi}{\sws(\A)} \right\rceil.
    \end{equation}
\end{theorem}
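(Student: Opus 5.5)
The plan is to reduce parallel discrimination to a statement about the tensor power subspace, and then read off the answer from Corollary~\ref{cor-many} together with Theorem~\ref{thm-tool} at $\alpha=0$.

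\textbf{Step 1: reduction to a tensor-power subspace.} The $N$-fold channels $\Phi^{\otimes N}$ and $\Psi^{\otimes N}$ have Kraus operators $A_{k_1}\otimes\cdots\otimes A_{k_N}$ and $B_{l_1}\otimes\cdots\otimes B_{l_N}$. Hence the products $B^\dagger A$ for these channels span exactly $\A^{\otimes N}$. By the criterion recalled in the introduction, $\Phi^{\otimes N}$ and $\Psi^{\otimes N}$ are perfectly discriminable (with an ancilla) if and only if there is a state $\rho\in\Sn{d^N}$ with $\tr(\rho X)=0$ for all $X\in\A^{\otimes N}$. Here we take the purification of $\rho$ as the input state.

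\textbf{Step 2: apply Theorem~\ref{thm-tool} with $\alpha=0$.} Since $\Sn{n}(0)=\Sn{n}$, Theorem~\ref{thm-tool} shows that the condition in Step~1 is equivalent to $\sws(\A^{\otimes N})\ge\pi$. By Corollary~\ref{cor-many}, this holds if and only if $\min\{N\sws(\A),\pi\}\ge\pi$, i.e.\ $N\sws(\A)\ge\pi$.

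\textbf{Step 3: read off the conclusions.} If $\sws(\A)>0$, then $N\sws(\A)\ge\pi$ for all sufficiently large $N$, so the channels are perfectly discriminable. The least such $N$ is $\lceil \pi/\sws(\A)\rceil$, which gives $N_0$. If $\sws(\A)=0$, then $N\sws(\A)=0<\pi$ for every $N$, so no number of parallel copies suffices. This proves the ``if and only if''.

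\textbf{Expected subtleties.} The one delicate point is Step~1. We must check that the span of the products $(B_{l_1}\otimes\cdots)^\dagger(A_{k_1}\otimes\cdots)$ equals $\A^{\otimes N}$ as defined, namely the span of elementary tensors from $\A$. This follows from multilinearity: each such product is $B_{l_1}^\dagger A_{k_1}\otimes\cdots\otimes B_{l_N}^\dagger A_{k_N}$, and the spanning set of $\A^{\otimes N}$ only requires spanning sets in each factor. We must also check that the orthogonality criterion applies with an arbitrary ancilla dimension, which is standard since a purification of $\rho$ always exists.

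No other obstacle arises, because all the analytic difficulty is contained in Corollary~\ref{cor-many}.
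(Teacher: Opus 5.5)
Your proposal is correct and follows the same route as the paper's proof. You reduce perfect parallel discrimination with $N$ copies to the existence of a state annihilating $\A^{\otimes N}$, rewrite this via Theorem~\ref{thm-tool} at $\alpha=0$ as $\sws(\A^{\otimes N})\ge\pi$, and then use Corollary~\ref{cor-many} to get $N\sws(\A)\ge\pi$. Your checks that the Kraus products of $\Phi^{\otimes N},\Psi^{\otimes N}$ span exactly $\A^{\otimes N}$, and your separate handling of $\sws(\A)=0$, spell out details the paper leaves implicit.
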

The proof of Theorem~\ref{thm-discr} is presented in Appendix~\ref{app-proof-thm-discr}.

\section{Conclusion and discussion}\label{sec:concl}

In this work, we have provided a necessary and sufficient condition for the perfect 
discrimination of general quantum channels in the parallel scheme.
We have also given an explicit formula for the discriminator and determined the 
number of copies required for perfect discrimination. As a consequence, we have resolved
Conjecture~1 of~\cite{duan2016parallel}. Along the way, we have established several
properties of the numerical range that may be of independent interest. In
particular, we have shown that the minimal angle of the numerical range of a
tensor product of matrix subspaces is equal to the sum of the minimal angles of
the numerical ranges of the individual subspaces. Our results on parallel
discrimination also naturally extend to the discrimination of product
channels.

Recall that the optimal probability of successfully discriminating two
equiprobable quantum channels $\Phi$ and $\Psi$ is directly related to the
diamond norm through $p_{\text{succ}} = \frac{1}{2} + \frac{1}{4}\|\Phi - \Psi\|_\diamond$.
Hence, our results
provide an efficient method for deciding whether the diamond norm of the
difference of two product channels attains its maximal value. In the parallel
scheme, however, the behavior of the success probability when perfect
discrimination is not possible remains an interesting open problem.

From a broader perspective, the case $p_{\text{succ}}= 1$ is characterized by the angle of
the numerical range, namely by the condition $\sw(A) = \pi$ for certain
matrices $A$ associated with the Kraus operators of the channels.
Equivalently, this condition can be expressed in terms of the distance of
$W(A)$ from the origin as $\mathrm{dist}(0, W(A)) = 0$. For the case
$p_{\text{succ}} < 1$, one of these two characteristics, the angle or the distance,
may turn out to be the key quantity. If the angle is the relevant one, the
computation of the diamond norm might be simplified by the additivity
property $\sw(A \otimes B) = \sw(A) + \sw(B)$. However, if the distance is
the relevant quantity, the problem becomes considerably harder since, in 
general, the distance of $0$ from $W(A \otimes B)$ depends on global
properties of $A \otimes B$ rather than only on those of the individual
factors. Determining which of these two quantities governs the case $p_{\text{succ}} < 1$
is, in our view, a key step towards a complete description of parallel
channel discrimination. 

\section*{Acknowledgments}

The width $\sw$ defined in Eq.\eqref{eq:sw-def}, the additivity of widths under
tensor products that underlies Theorem~\ref{thm-tensor}, and a first outline
of the proof of Corollary~\ref{cor-main} were obtained in interactive sessions
with Claude Fable~5 (Anthropic), directed by A.B., after A.B.\ had proposed
the reformulation that made these tools applicable. P.L.\ and R.K.\ then
developed the shorter proofs presented here and derived semidefinite program that allows efficient
estimation of $\sws$ and a search for the witness state. R.K.\ proposed the
original research idea, and P.L.\ wrote the manuscript. The authors used AI
tools to improve the clarity and grammar of the text, they independently verified every
step of the manuscript, and they take full responsibility for the content of this publication.

A.B. and P.L. are supported by the Ministry of Education, Youth and Sports of the Czech Republic through the e-INFRA CZ (ID:90254).
R.K. is supported by the European Union under the Quantum error correction codes enhanced by reinforcement learning dedicated for the Ising model-based optimization, contract nr. 01906/2025/RRC via the Operational Programme Just Transition and Moravian-Silesian Region. 

\appendix
\section{Native SDP~\ref{sdp-main}}\label{app-sdp-spec}
In the following formulation, we use the Schur complement lemma (Lemma 3.18 in \cite{watrous2018theory}):
\begin{equation}\label{sdp-spec}
        \begin{split}
        & \textbf{\underline{SDP formulation for computing $D_\alpha(A_1,\ldots,A_k)$}}\\
            \text{minimize:}\quad &
            \sum_{l=1}^k c_l
            \\[2mm]
            \text{subject to:}\quad & \begin{pmatrix}
                c_l & \tr(\rho A_l) \\
                \overline{\tr(\rho A_l)} & c_l
            \end{pmatrix} \psd \quad l=1,\ldots,k;\\
            & \rho = e^{i \frac{\alpha}{2}} P + e^{-i \frac{\alpha}{2}} Q;\\
            &P,Q \psd;\\
            &\tr(P)+\tr(Q) = 1;\\[2mm]
            \text{variables:}\quad & c_l \in \R \quad l=1,\ldots,k;\\
            & P,Q \in \Mn{n}.
        \end{split}
    \end{equation}

\section{The proof of Theorem~\ref{thm-tool}}\label{app-proof-thm-tool}
We introduce the definitions and technical lemmas required to prove the theorem. First, we observe that for any $A \neq 0$ we have $|\Gamma_A| \le \pi$.
\begin{definition}
For any non-zero matrix $0 \neq A \in\Mn{n} $ we define $\swc(A) \in [0,\pi]$ as $\swc(A) = \pi - |\Gamma_A|$.
\end{definition}
We have some basic facts, which we leave without proof
\begin{lemma}\label{lemma-basic}
Let $A \in \Mn{n}$. The following is true:
    \begin{enumerate}
    \item $\sw(A) = \swc(A)$ whenever $0 \notin W(A)$ ($\sw(A) < \pi$),
    \item $\sw(A) = \sw(e^{i \alpha} A)$ for any $\alpha \in \R$,
    \item $\sw(A) = \sw(GAG^\dagger)$ for any full rank matrix $G \in \Mn{n}$,
    \item $\Re W(A) \ge 0$ if and only if $\Re(A) \psd$,
    \item If $A \neq 0$ and  $\swc(A) < \pi$, then $\tr(A) \neq 0$.
    \end{enumerate}
\end{lemma}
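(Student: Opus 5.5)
The five items are elementary, and I would prove them one at a time directly from the definitions of $W$, $\Gamma_A$, $\sw$ and $\swc$. The main tools are the identity $\Re\bra{v}A\ket{v} = \bra{v}\Re(A)\ket{v}$ and the fact that congruence by an invertible matrix preserves positive semidefiniteness.

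\emph{Items 1 and 4.} Item 1 is immediate. When $0\notin W(A)$ we have $A\neq 0$, and Eq.~\eqref{eq:sw-def} gives $\sw(A)=\pi-|\Gamma_A|=\swc(A)$. Item 4 follows from $\Re\bra{v}A\ket{v}=\bra{v}\Re(A)\ket{v}$ for every unit vector $\ket{v}$. Thus $\Re W(A)\ge 0$ holds exactly when $\bra{v}\Re(A)\ket{v}\ge 0$ for all $\ket{v}$, which is the condition $\Re(A)\psd$.

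\emph{Item 2.} Since $W(e^{i\alpha}A)=e^{i\alpha}W(A)$, membership of $0$ in the numerical range is unchanged. Moreover $\Re(e^{i\theta}e^{i\alpha}A)\psd$ holds if and only if $\theta+\alpha\in\Gamma_A$ modulo $2\pi$. Hence $\Gamma_{e^{i\alpha}A}$ is a rotation of $\Gamma_A$ on the circle $[-\pi,\pi)$, so it has the same measure, and both cases of the definition of $\sw$ agree.

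\emph{Item 3.} For invertible $G$ we have $\Re(e^{i\theta}GAG^\dagger)=G\,\Re(e^{i\theta}A)\,G^\dagger$. Congruence by an invertible matrix preserves positive semidefiniteness in both directions, so $\Gamma_{GAG^\dagger}=\Gamma_A$. For the zero condition I would use the state formulation $W(X)=\{\tr(\rho X):\rho\in\Sn{n}\}$. If $\tr(\sigma\, GAG^\dagger)=0$ for a state $\sigma$, then $\tr(G^\dagger\sigma G\, A)=0$. Here $G^\dagger\sigma G$ is positive semidefinite and nonzero because $G$ is invertible, so after normalisation it is a state witnessing $0\in W(A)$. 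The converse direction is the same argument with $G^{-1}$ in place of $G$.

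\emph{Item 5.} This is the only item requiring a small argument. Since $\swc(A)<\pi$, the set $\Gamma_A$ has positive measure. It therefore contains two angles $\theta_1,\theta_2$ with $\theta_2-\theta_1\notin\pi\mathbb{Z}$. Suppose, for contradiction, that $\tr(A)=0$. Then $\tr\Re(e^{i\theta_j}A)=\Re\bigl(e^{i\theta_j}\tr A\bigr)=0$ for $j=1,2$. A positive semidefinite matrix with zero trace is zero, so $\Re(e^{i\theta_j}A)=\cos\theta_j\Re(A)-\sin\theta_j\Im(A)=0$ for $j=1,2$. The coefficient determinant of this $2\times 2$ linear system in $\Re(A),\Im(A)$ equals $\pm\sin(\theta_2-\theta_1)\neq 0$. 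Hence $\Re(A)=\Im(A)=0$, so $A=0$, which contradicts the hypothesis.

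The only point needing care is the choice in item 5 of two angles in $\Gamma_A$ that do not differ by a multiple of $\pi$; positive measure guarantees such a pair exists. The same linear-system computation also justifies the remark before the definition of $\swc$ that $|\Gamma_A|\le\pi$ for $A\neq 0$. Indeed, $\Gamma_A$ is an intersection of closed half-circles, namely the angles $\theta$ with $\Re(e^{i\theta}\bra{v}A\ket{v})\ge 0$. For $A\neq 0$ at least one $\bra{v}A\ket{v}$ is nonzero, and the corresponding half-circle already has measure $\pi$.
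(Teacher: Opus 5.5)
Your arguments for all five items are correct. The paper states this lemma without proof (``basic facts, which we leave without proof''), so there is no competing route to compare. Your direct verifications from the definitions supply exactly the omitted details.

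One small addition is worth making. The parenthetical in item~1 is used later in the paper as an equivalence, $0\notin W(A) \iff \sw(A)<\pi$, and the direction you do not treat is the nontrivial one. For example, in Lemma~\ref{lem-sw-prod} Lemma~\ref{lemma-no-zero} is invoked from $0\notin W(A)$ alone, although that lemma requires $\swc(A)<\pi$. The direction $\sw(A)<\pi \Rightarrow 0\notin W(A)$ is immediate from the definition. The converse needs a one-line separation argument:
\begin{itemize}
\item Since $W(A)$ is compact and convex with $0\notin W(A)$, strict separation gives $\theta_0$ and $c>0$ with $\Re(e^{i\theta_0}w)\ge c$ for all $w\in W(A)$.
\item Since $|w|\le\|A\|$, one has $\Re(e^{i\theta}w)\ge c-|\theta-\theta_0|\,\|A\|>0$ for all $\theta$ close to $\theta_0$.
\item Your item~4, applied to $e^{i\theta}A$, then places a whole interval around $\theta_0$ inside $\Gamma_A$. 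Hence $|\Gamma_A|>0$ and $\sw(A)=\swc(A)<\pi$.
\end{itemize}
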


\begin{lemma}\label{lemma-no-zero}
Let $0 \neq A \in\Mn{n}$ be a matrix that $\swc(A) < \pi$. Then we can decompose $A = GDG^\dagger$, where $G$ is a nonsingular matrix and $D$ is a diagonal matrix with $D_{k,k}: |D_{k,k}| \in \{0, 1\}$ for $k=1,\ldots,n$.
\end{lemma}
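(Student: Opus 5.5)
Since $\swc(A)<\pi$, the set $\Gamma_A$ has positive measure. I first claim it is a closed arc of length $|\Gamma_A|>0$. The condition $\Re(e^{i\theta}A)\psd$ means $\Re(e^{i\theta}\bra{v}A\ket{v})\ge 0$ for all unit $\ket{v}$. This says that $W(A)$ lies in the closed half-plane $\{z:\Re(e^{i\theta}z)\ge0\}$. The set of such $\theta$ is an intersection of closed arcs of length $\pi$, hence a closed arc (possibly degenerate). It has positive length, so it has nonempty interior. Rotating $A$ by a phase, which by Lemma~\ref{lemma-basic} changes nothing relevant and can be absorbed at the end, I will assume $\theta=0$ is an interior point of $\Gamma_A$. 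Then there is $\epsilon>0$ with $\Re(e^{\pm i\epsilon}A)\psd$.

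\textbf{Step 1: a common kernel.} Write $A=H+iK$ with $H=\Re(A)$ and $K=\Im(A)$ Hermitian. We have $\Re(e^{i\theta}A)=\cos\theta\, H-\sin\theta\, K$, so both $\cos\epsilon\,H\mp\sin\epsilon\,K\psd$. Adding the two gives $H\psd$. Moreover, for every unit $\ket{v}$, $|\bra{v}K\ket{v}|\le\cot\epsilon\,\bra{v}H\ket{v}$. I then show $\ker H\subseteq\ker K$. If $H\ket{v}=0$, the inequality applied to $\ket{v}+t\ket{w}$ gives $|2t\Re\bra{w}K\ket{v}+O(t^2)|\le C t^2$ for small real $t$. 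Hence $\Re\bra{w}K\ket{v}=0$ for all $\ket{w}$, and replacing $\ket{w}$ by $i\ket{w}$ gives $K\ket{v}=0$. So $\ker H\subseteq\ker K$. This step, forcing the kernel of $K$ to contain that of $H$, is the main subtlety. It is exactly where openness of the angle, i.e.\ $|\Gamma_A|>0$, is used; with $|\Gamma_A|=0$ it fails.

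\textbf{Step 2: simultaneous congruence.} Let $r=\operatorname{rank}H$ and choose an invertible $G_1$ with $G_1^{-1}H G_1^{-\dagger}=\Id_r\oplus 0$. The congruent matrix $G_1^{-1}KG_1^{-\dagger}$ is still Hermitian. It annihilates the last $n-r$ basis vectors, since these span the kernel of the transformed $H$, which by Step 1 lies in the kernel of the transformed $K$. Therefore it equals $K'\oplus 0$ with $K'\in\Mn{r}$ Hermitian. Diagonalize $K'=U\Lambda U^\dagger$ by a unitary $U$, and set $G_2=G_1(U\oplus\Id_{n-r})$. Then
\begin{equation*}
G_2^{-1}AG_2^{-\dagger}=(\Id_r+i\Lambda)\oplus 0,
\end{equation*}
which is diagonal with entries $1+i\lambda_k\neq0$ and zeros.

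\textbf{Step 3: normalization.} Rescale by $S=\diag\bigl(|1+i\lambda_1|^{1/2},\dots,|1+i\lambda_r|^{1/2},1,\dots,1\bigr)$, so that each nonzero entry has modulus $1$. Then put $G=e^{-i\phi/2}G_2S$, where $e^{i\phi}$ is the initial rotation phase. This gives $A=GDG^\dagger$ with $D$ diagonal and $|D_{k,k}|\in\{0,1\}$. The phase could equally be put into $D$, since unimodular entries are allowed.
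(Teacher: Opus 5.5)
Your proof is correct apart from a slip in Step~3 (below), and it takes a genuinely different route from the paper.

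The paper quotes the Horn--Sergeichuk canonical form for $*$congruence, writing $GAG^\dagger$ as a direct sum of blocks $J_m(0)$, $\lambda\Delta_m$, $H_{2m}(\mu)$. It then asserts that only the $1\times 1$ blocks $J_1(0)$ and $\lambda\Delta_1$ survive. Two facts are left implicit there. First, $\Gamma_A\subseteq\Gamma_{M_l}$ for every block. Second, every larger block has $|\Gamma_{M_l}|=0$:
\begin{itemize}
\item $\Re(e^{i\theta}J_m(0))$ for $m\ge2$ has zero diagonal and a nonzero off-diagonal entry;
\item the zero $(1,1)$ entry of $\Delta_m$ for $m\ge2$ forces $\cos\theta=0$;
\item $\Re(e^{i\theta}H_{2m}(\mu))$ has zero diagonal blocks, so it would have to vanish, which $|\mu|>1$ forbids.
\end{itemize}

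Your argument avoids the classification entirely. After rotating so that $\Re(e^{\pm i\epsilon}A)\psd$, you get $\Re(A)\psd$ and $\ker\Re(A)\subseteq\ker\Im(A)$. A simultaneous congruence of a positive semidefinite matrix with a Hermitian matrix vanishing on its kernel then finishes the proof. This buys an elementary, self-contained and constructive proof: one congruence to $\Id_r\oplus 0$ and one unitary diagonalization, both computable. It also isolates exactly where $|\Gamma_A|>0$ is used. For instance, $-i\Delta_2$ has $\Re(-i\Delta_2)\psd$ but $\ker\Re(-i\Delta_2)\not\subseteq\ker\Im(-i\Delta_2)$, so your kernel inclusion is precisely what the forbidden blocks violate. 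The paper's version is shorter on the page, but it rests on a heavy external theorem plus a case check it does not carry out.

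The slip is in Step~3. A scalar phase placed in $G$ cancels, since $(cG')D(cG')^\dagger=|c|^2G'DG'^\dagger$. So $G=e^{-i\phi/2}G_2S$ reproduces the rotated matrix $e^{i\phi}A$, not $A$. Your alternative is the one that works, and the only one: keep $G=G_2S$ and replace $D$ by $e^{-i\phi}D$, which still has $|D_{k,k}|\in\{0,1\}$.
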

\begin{proof}
   Based on Theorem 1.1 in \cite{horn2006canonical} there is a nonsingular matrix $G$ such that $GAG^\dagger = M_1 \oplus \ldots \oplus M_k$, where $M_l$ are matrices of three Types: 
   \begin{enumerate}
       \item $J_m(0)$, where $J_m(\lambda) = \begin{pmatrix}
           \lambda & 1 & & 0\\
            & \lambda & \ddots & \\
            & & \ddots & 1\\
            0 & & & \lambda
       \end{pmatrix}$ and $J_1(\lambda) = \lambda$.
       \item $\lambda \Delta_m$, where $|\lambda|=1$ and $\Delta_m = \begin{pmatrix}
           0 &  & & 1\\
            &  & \iddots & i \\
            & 1& \iddots & \\
            1 & i & & 0
       \end{pmatrix}$ and $\Delta_1 = 1$.
       \item $H_{2m}(\mu)$, where $|\mu| > 1$ and $H_{2m}(\mu) = \begin{pmatrix}
           0 & \Id_m\\
           J_m(\mu) & 0
       \end{pmatrix}$.
   \end{enumerate} 
 By assumption we have $\swc(M_l) < \pi$ for $M_l \neq 0$. Therefore, in our decomposition matrices $M_l$ are of the following types: $J_1(0)$ and $\lambda \Delta_1$. 
\end{proof}

\begin{lemma}\label{lem-prop-states}
    For any $\alpha \in [0,\pi)$ the set $\Sn{n}(\alpha)$ is non-empty, convex and compact. In particular, if $\rho \in \Sn{n}(\alpha)$, then $\swc(\rho) \le \alpha$.
\end{lemma}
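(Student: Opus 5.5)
We prove the three claims in turn: non-emptiness, then convexity and compactness, then the width bound $\swc(\rho)\le\alpha$.

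\emph{Non-emptiness.} Taking $P=\Id_n/n$ and $Q=0$ gives $e^{i\alpha/2}\Id_n/n\in\Sn{n}(\alpha)$.

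\emph{Compactness.} The set $\Sn{n}(\alpha)$ is the image of the set
\[
K=\{(P,Q): P,Q\psd,\ \tr P+\tr Q=1\}
\]
under the continuous (indeed linear) map $(P,Q)\mapsto e^{i\alpha/2}P+e^{-i\alpha/2}Q$. The set $K$ is closed. It is also bounded, since $\|P\|\le\tr P\le 1$ and likewise for $Q$. Hence $K$ is compact, and so is its image.

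\emph{Convexity.} The set $K$ is convex, because both the positivity constraints and the trace constraint are preserved under convex combinations. The image of a convex set under a linear map is convex, so $\Sn{n}(\alpha)$ is convex.

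\emph{The width bound.} Let $\rho=e^{i\alpha/2}P+e^{-i\alpha/2}Q\in\Sn{n}(\alpha)$. Since $\tr P+\tr Q=1$, at least one of $P,Q$ is nonzero, so $\rho\neq 0$ and $\swc(\rho)$ is defined. We must show $|\Gamma_\rho|\ge\pi-\alpha$. Because $P$ and $Q$ are Hermitian,
\[
\Re(e^{i\theta}\rho)=\cos(\theta+\alpha/2)\,P+\cos(\theta-\alpha/2)\,Q .
\]
This is positive semidefinite whenever both cosines are nonnegative, that is, whenever
\[
\theta+\alpha/2\in[-\pi/2,\pi/2]\quad\text{and}\quad\theta-\alpha/2\in[-\pi/2,\pi/2].
\]
These two conditions together say $\theta\in[-\pi/2+\alpha/2,\ \pi/2-\alpha/2]$. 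Since $\alpha<\pi$, this interval is nonempty and lies inside $[-\pi,\pi)$, and its length is $\pi-\alpha$. Therefore $\Gamma_\rho$ contains this interval, so $|\Gamma_\rho|\ge\pi-\alpha$ and $\swc(\rho)=\pi-|\Gamma_\rho|\le\alpha$.

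The only point needing care is that $\Gamma_\rho$ is measured on $[-\pi,\pi)$, and the interval above fits inside it. No appeal to Lemma~\ref{lemma-no-zero} is needed.
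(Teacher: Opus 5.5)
Your proof is correct and follows essentially the paper's argument: you establish convexity and compactness of $\Sn{n}(\alpha)$ (compactness as the linear image of the compact set $K$, where the paper uses a subsequence argument), then show $[-(\pi/2-\alpha/2),\pi/2-\alpha/2]\subset\Gamma_\rho$, which your formula $\Re(e^{i\theta}\rho)=\cos(\theta+\alpha/2)P+\cos(\theta-\alpha/2)Q$ makes explicit. One small repair is needed: ``one of $P,Q$ is nonzero'' does not by itself give $\rho\neq 0$ (for $\alpha=\pi$ and $P=Q$ one gets $\rho=0$), so argue instead that $\tr\Re(\rho)=\cos(\alpha/2)(\tr P+\tr Q)=\cos(\alpha/2)>0$ because $\alpha<\pi$.
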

\begin{proof}
It is easy to see that $\Sn{n}(\alpha)$ is convex. For any $\rho \in \Sn{n}(\alpha)$ we have $\|\rho\|_1 \le 1$, so $\Sn{n}(\alpha)$ is bounded. Moreover, if $\rho = \lim_{p \to \infty} \rho_p$, where $\rho_p \in \Sn{n}(\alpha)$, then $\rho_p = e^{i\frac{\alpha}{2}}P_p + e^{-i\frac{\alpha}{2}}Q_p$. The sequences $(P_p)_P, (Q_p)_p$ belong to the compact set $\{M \psd, \tr(M) \le 1\}$, therefore, we can choose appropriate subsequences $P_{p_l} \to P$ and $Q_{p_l} \to Q$, where $P,Q \psd$ and $\tr(P+Q)=1$. Therefore, $\rho = \lim_{l \to \infty} \rho_{p_l} = e^{i\frac{\alpha}{2}}P + e^{-i\frac{\alpha}{2}}Q \in \Sn{n}(\alpha)$. We have shown that the set $\Sn{n}(\alpha)$ is compact. Finally, for $\rho \in \Sn{n}(\alpha)$ we get $\rho \neq 0$ and $[-(\pi/2-\alpha/2),\pi/2-\alpha/2] \subset \Gamma_\rho$. That implies $\swc(\rho) \le \alpha$.
\end{proof}

\begin{lemma}\label{lemma4}
   Let us define $A \in \Mn{n}$. For any $0 \neq X \in \Mn{n}$ such that $\tr(X A) = 0$ we have $\swc(X) \ge \pi - \sw(A)$.
    Moreover, if $\sw(A) > 0$, then for each $\alpha \in [\pi - \sw(A), \pi)$ there is $\rho \in \Sn{n}(\alpha)$ such that $\tr(\rho A) = 0$.
\end{lemma}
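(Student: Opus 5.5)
The plan is to reduce everything to a diagonal normal form using Lemma~\ref{lemma-no-zero}, and then to argue with angular sectors in the complex plane.

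\emph{First claim.} If $\sw(A)=\pi$ there is nothing to prove, since $\swc(X)\ge 0$. If $|\Gamma_X|=0$ then $\swc(X)=\pi$ and the claim is again trivial. So assume $\sw(A)<\pi$, i.e.\ $0\notin W(A)$, and $|\Gamma_X|>0$. Then $A\neq 0$ and $\swc(A)=\sw(A)<\pi$, so Lemma~\ref{lemma-no-zero} gives $A=GDG^\dagger$ with $D=\diag(d_1,\dots,d_n)$ and $|d_k|\in\{0,1\}$.

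Set $Y=G^\dagger X G$. Then
\[
\tr(XA)=\tr(YD)=\sum_k Y_{kk}d_k ,
\]
and $\Gamma_Y=\Gamma_X$ because congruence by a nonsingular matrix preserves positivity of $\Re(e^{i\theta}\cdot)$.

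\begin{itemize}[label=$\circ$]
\item The nonzero $d_k$ are the extreme points of $W(D)$, which omits $0$. Hence they lie in a closed sector of opening $\sw(A)$, namely the dual of $\Gamma_A=\Gamma_D$.
\item Each diagonal entry $Y_{kk}$ satisfies $\Re(e^{i\theta}Y_{kk})\ge 0$ for every $\theta\in\Gamma_X$. So it lies in a closed sector of opening $\pi-|\Gamma_X|$.
\item Consequently every product $Y_{kk}d_k$ lies in a closed sector of opening $\pi-|\Gamma_X|+\sw(A)$.
\end{itemize}

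Suppose, for contradiction, that $\swc(X)<\pi-\sw(A)$. Then this opening is $<\pi$, so the sector is pointed. A vanishing sum of points of a pointed cone forces every term to vanish, so $Y_{kk}d_k=0$ for all $k$.

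I then need to upgrade this to $Y_{kk}=0$ for the relevant indices and conclude $Y=0$. This is the main obstacle.

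\begin{itemize}[label=$\circ$]
\item Indices with $d_k=0$ are a nuisance. I would first try to show they do not occur: if some $d_k=0$, then $D$ is singular, and the same holds for $A$. Then $0\in W(A)$, contradicting $\sw(A)<\pi$.
\item With all $d_k\neq0$ we get $\operatorname{diag}(Y)=0$. For $\theta\in\Gamma_X$, the matrix $\Re(e^{i\theta}Y)$ is positive semidefinite with zero diagonal, hence equals $0$.
\item Since $|\Gamma_X|>0$, we may pick two such $\theta$ that are not congruent mod $\pi$. Then $\Re(e^{i\theta_1}Y)=\Re(e^{i\theta_2}Y)=0$ forces $Y=0$, hence $X=0$, a contradiction.
\end{itemize}

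\emph{Second claim.} If $\sw(A)=\pi$, pick a state $\rho_0$ with $\tr(\rho_0A)=0$, which exists because $0\in W(A)$. Taking $P=Q\propto\rho_0$ gives an element of $\Sn{n}(\alpha)$ proportional to $\rho_0$. This is enough, since the condition $\tr(\rho A)=0$ is homogeneous.

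In general I first reduce to $\alpha=\pi-\sw(A)$ via a monotonicity observation. For $0\le\alpha'\le\alpha<\pi$, the phase $e^{\pm i\alpha'/2}$ is a nonnegative combination of $e^{i\alpha/2}$ and $e^{-i\alpha/2}$. Hence every element of $\Sn{n}(\alpha')$ is a positive multiple of an element of $\Sn{n}(\alpha)$.

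For $0<\sw(A)<\pi$, use $A=GDG^\dagger$ as above and take indices $k_1,k_2$ whose $d$'s are the two extreme directions, $d_{k_2}=e^{i\sw(A)}d_{k_1}$. Set
\[
Y=e^{i\alpha/2}p\,E_{k_1k_1}+e^{-i\alpha/2}q\,E_{k_2k_2},\qquad \alpha=\pi-\sw(A),\ p,q>0 .
\]
Then $pe^{i\alpha/2}d_{k_1}+qe^{-i\alpha/2}d_{k_2}=0$ becomes $p=q$, because the two phases differ by exactly $\pi$.

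Finally set $\rho\propto G^{-\dagger}YG^{-1}$. Congruence keeps the two pieces positive semidefinite and nonzero, so after normalizing the traces $\rho\in\Sn{n}(\alpha)$. Moreover
\[
\tr(\rho A)\propto\tr(YD)=0 .
\]
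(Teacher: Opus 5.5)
Your first part is correct and follows the paper's route: you pass to the $*$congruence normal form of Lemma~\ref{lemma-no-zero}, and a pointed-sector argument forces $\diag(G^\dagger X G)=0$. You also prove two things the paper leaves implicit. First, $D$ has no zero entries. Second, a matrix with zero diagonal and $|\Gamma_Y|>0$ must vanish, which replaces the appeal to Lemma~\ref{lemma-basic}(5). The case $\sw(A)=\pi$ is also correct. So is your monotonicity reduction in $\alpha$, which is a clean alternative to the paper's direct choice of $(D_P)_{n,n},(D_Q)_{n,n}$ for general $\alpha$.

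The explicit witness in the second part is wrong as written. With your labeling $d_{k_2}=e^{i\sw(A)}d_{k_1}$ and $Y=e^{i\alpha/2}pE_{k_1k_1}+e^{-i\alpha/2}qE_{k_2k_2}$, you get
\[
\tr(YD)=d_{k_1}\bigl(p\,e^{i\alpha/2}+q\,e^{i(\sw(A)-\alpha/2)}\bigr).
\]
The two phases differ by $\sw(A)-\alpha=2\sw(A)-\pi\in(-\pi,\pi)$, not by $\pi$. So for $0<\sw(A)<\pi$ this never vanishes with $p,q>0$. For example, $D=\diag(1,i)$ and $\alpha=\pi/2$ give $\tr(YD)=(p+q)e^{i\pi/4}$.

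You paired each phase with the wrong extreme direction. The products $Y_{kk}d_k$ have arguments in $[\arg d_{k_1}-\alpha/2,\ \arg d_{k_1}+\sw(A)+\alpha/2]$, a sector of opening exactly $\alpha+\sw(A)=\pi$. Cancellation is only possible on its two boundary rays, which means putting $e^{-i\alpha/2}$ on $k_1$ and $e^{i\alpha/2}$ on $k_2$. With $Y=e^{-i\alpha/2}pE_{k_1k_1}+e^{i\alpha/2}qE_{k_2k_2}$ the phases are $-\alpha/2$ and $\sw(A)+\alpha/2$. They differ by $\pi$, so $p=q$ works.

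This corrected $Y$ is exactly the paper's choice: the $Q$-part sits on the index with phase $0$ and the $P$-part on the index with phase $\sw(A)$. With that swap, the rest of your argument goes through unchanged: the congruence by $G^{-1}$, the normalization, $\tr(\rho A)\propto\tr(YD)=0$, and the extension to all $\alpha\ge\pi-\sw(A)$.
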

\begin{proof}
    If $\sw(A) < \pi$, then $A=GDG^\dagger$ from Lemma~\ref{lemma-no-zero}. Then, we have $\sum_k D_{k,k} \bra{k} G^\dagger X G \ket{k} = 0$ with $|D_{k,k}|=1$. Let us assume indirectly $\swc(X) < \pi - \sw(A)$. Then, we have $\bra{k} G^\dagger X G \ket{k}=0$ for each $k$, which implies $\tr(G^\dagger X G)=0$. However, it holds $\swc(G^\dagger X G) = \swc(X)$, so we have a contradiction with Lemma~\ref{lemma-basic}.
    
     To prove the second statement, we consider two cases. If $\sw(A) = \pi$, then there is $\sigma \in \Sn{n}$ that $\tr(\sigma A)=0$. For any $\alpha$ we take $\rho = e^{i\frac{\alpha}{2}}\sigma$. Now, if $\sw(A) < \pi$, we have the decomposition $A=GDG^\dagger$. From Lemma~\ref{lemma-basic} without loss of generality, we can assume that the elements of $D$ are $e^{i \beta_{k}}$, where $\beta_{1}=0 \le \ldots \le \beta_{n} = \sw(A)$. Let us define two diagonal matrices $D_P, D_Q \psd$ for which all elements are zeros except $(D_P)_{1,1}, (D_P)_{n,n}, (D_Q)_{1,1}, (D_Q)_{n,n}$. We want to define them to satisfy 
    \begin{equation}
        1\times(e^{i \frac{\alpha}{2}} (D_P)_{1,1} + e^{-i \frac{\alpha}{2}} (D_Q)_{1,1}) + e^{i \sw(A)}\times(e^{i \frac{\alpha}{2}} (D_P)_{n,n} + e^{-i \frac{\alpha}{2}} (D_Q)_{n,n})=0.
    \end{equation}
    To do so, we can set $(D_P)_{1,1}=0$ and $(D_Q)_{1,1}=1$ and set $(D_P)_{n,n}, (D_Q)_{n,n} \ge 0$ to satisfy $e^{i \frac{\alpha}{2}} (D_P)_{n,n} + e^{-i \frac{\alpha}{2}} (D_Q)_{n,n} = e^{i(\pi - \sw(A) - \alpha/2)}$. The last is doable as $\pi - \sw(A) - \alpha/2 \in [-\alpha/2,\alpha/2]$. We define $\rho = e^{i \frac{\alpha}{2}}P+e^{-i \frac{\alpha}{2}}Q \in \Sn{n}(\alpha)$, where $P = c(G^{-1})^\dagger D_P G^{-1}$ and $Q = c(G^{-1})^\dagger D_Q G^{-1}$ and some normalization constant $c > 0$ for $\tr(P)+\tr(Q)=1$. Observe that $\tr(\rho A) = 0$.
\end{proof}
\begin{lemma}\label{lemma-equiv-exists}
Let $\A$ be a linear subspace of $\Mn{n}$ and $\alpha \in [0,\pi)$. Then, the following conditions are equivalent
\begin{enumerate}
    \item $\exists_{\rho \in \Sn{n}(\alpha)} \forall_{A \in \A} \tr(\rho A) = 0$,
    \item $\forall_{A \in \A} \exists_{\rho \in \Sn{n}(\alpha)} \tr(\rho A) = 0$.
\end{enumerate}
\end{lemma}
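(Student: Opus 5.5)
The direction (1)$\Rightarrow$(2) is immediate: a single $\rho\in\Sn{n}(\alpha)$ annihilating every $A\in\A$ works for each $A$ separately. The plan for (2)$\Rightarrow$(1) is a convexity and separation argument, resting on Lemma~\ref{lem-prop-states}, which says $\Sn{n}(\alpha)$ is non-empty, convex and compact. The idea is that a separating functional can itself be realized as an element of $\A$, so a failure of (1) produces a single $A\in\A$ violating (2).

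First fix a basis $(A_l)_{l=1}^k$ of $\A$ and consider the map
\begin{equation*}
T:\Sn{n}(\alpha)\to\C^k,\qquad T(\rho)=\bigl(\tr(\rho A_1),\ldots,\tr(\rho A_k)\bigr).
\end{equation*}
Condition (1) says exactly that $0\in K:=T(\Sn{n}(\alpha))$. The map $T$ is the restriction of a linear map to the convex set $\Sn{n}(\alpha)$, and it is continuous. Hence $K$ is a convex, compact subset of $\C^k\cong\R^{2k}$.

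Next suppose, towards a contradiction, that (2) holds but $0\notin K$. Apply the strict separating hyperplane theorem in the real space $\R^{2k}$ to the point $0$ and the compact convex set $K$. Every real linear functional on $\C^k$ has the form $z\mapsto\Re\sum_l \overline{c_l}z_l$ for some $c\in\C^k$. So we obtain $c$ with
\begin{equation*}
\Re\sum_{l=1}^k \overline{c_l}\,\tr(\rho A_l)>0\quad\text{for all }\rho\in\Sn{n}(\alpha).
\end{equation*}
Put $A:=\sum_l \overline{c_l}A_l\in\A$. By linearity of the trace, $\Re\tr(\rho A)>0$ for every $\rho\in\Sn{n}(\alpha)$, so $\tr(\rho A)\neq 0$ for all such $\rho$. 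This contradicts (2) applied to this $A$, and therefore $0\in K$, which is (1).

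The only points needing care are, first, that the separation must be carried out over $\R$, since $\Sn{n}(\alpha)$ is not closed under complex scalars. Second, the resulting real functional must be rewritten as $\rho\mapsto\Re\tr(\rho A)$ for some $A$ in the complex span $\A$; this is exactly the identification of real functionals on $\C^k$ used above. Compactness of $K$, inherited from Lemma~\ref{lem-prop-states}, is what makes the separation strict. I do not expect a genuine obstacle beyond these points.
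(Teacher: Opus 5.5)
Your proof is correct and follows the paper's argument: the paper also maps $\Sn{n}(\alpha)$ onto the convex compact set $\{(\tr(\rho A_l))_{l=1}^k\}$ using Lemma~\ref{lem-prop-states}, and then separates $0$ from it by a hyperplane. Your version also spells out a step the paper leaves implicit, namely that the separating real functional is $\rho\mapsto\Re\tr(\rho A)$ with $A=\sum_l\overline{c_l}A_l\in\A$, which is what produces the contradiction with condition 2.
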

\begin{proof}
The former condition naturally implies the latter; hence we will show the reverse implication. Let $(A_l)_{l=1}^k \subset \A$ be a basis of $\A$. Define a set 
$B = \{(\tr(\rho A_l))_{l=1}^k: \rho \in \Sn{n}(\alpha)\}$. By Lemma~\ref{lem-prop-states} the set $B$ is convex and compact. If $0 \not\in B$, then by the hyperplane separation theorem there is a vector $\ket{v} \neq 0$, such that $\Re(\braket{v}{b}) < 0$ for any $b \in B$. This is a contradiction to the latter condition, so $0 \in B$. That finishes the proof.
\end{proof}

\begin{proof} (of Theorem~\ref{thm-tool})
The equivalence between conditions 2. and 3. is obvious. We will show the equivalence of 1. and 3..

If 3. holds, then by Lemma~\ref{lemma4} we have $\swc(\rho) + \sw(A) \ge \pi$ for any $A \in \A$. By Lemma~\ref{lem-prop-states} it implies $\alpha + \sw(A) \ge \pi$. Taking the infimum over $A \in \A$ we get $\alpha + \sws(\A) \ge \pi$.

If 1. holds, then $\sw(A) \ge \sws(\A) > 0$ for any $A \in \A$. In particular, we have $\alpha + \sw(A) \ge \pi$. By Lemma~\ref{lemma4} for any $A \in \A$ there is $\rho \in \Sn{n}(\alpha)$ such that $\tr(\rho A)=0$. By Lemma~\ref{lemma-equiv-exists} there is $\rho_0 \in \Sn{n}(\alpha)$ such that $\tr(\rho_0 A)=0$ for any $A \in \A$.

\end{proof}
\section{The proof of Theorem~\ref{thm-tensor}}\label{app-proof-thm-tensor}  

\begin{lemma}\label{lem-sw-prod}
For $A \in \Mn{n}$ and $B \in \Mn{m}$ we have $\sw(A \otimes B) = \min(\sw(A) + \sw(B), \pi)$.
\end{lemma}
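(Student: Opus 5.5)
The plan is to reduce everything to diagonal matrices with unimodular entries, where the numerical range is an explicit polygon and widths become arc lengths on the unit circle.

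\textbf{Trivial cases.} First dispose of the case where one factor has $0$ in its numerical range. If $A=0$ or $B=0$, then $A\otimes B=0$ and every side equals $\pi$. Suppose $0\in W(A)$. Pick a state $\rho$ with $\tr(\rho A)=0$ and let $\sigma$ be any state in $\Sn{m}$. Then $\rho\otimes\sigma\in\Sn{nm}$ and
\[
\tr\bigl((\rho\otimes\sigma)(A\otimes B)\bigr)=\tr(\rho A)\tr(\sigma B)=0 .
\]
Hence $0\in W(A\otimes B)$, so $\sw(A\otimes B)=\pi=\min(\sw(A)+\sw(B),\pi)$. The case $0\in W(B)$ is symmetric.

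\textbf{Reduction to the diagonal case.} From now on assume $0\notin W(A)$ and $0\notin W(B)$, so $\swc(A)=\sw(A)<\pi$ and likewise for $B$. By Lemma~\ref{lemma-no-zero}, write $A=GDG^\dagger$ and $B=HEH^\dagger$ with $D,E$ diagonal and entries of modulus $0$ or $1$. Then
\[
A\otimes B=(G\otimes H)(D\otimes E)(G\otimes H)^\dagger ,
\]
and $G\otimes H$ is nonsingular. By Lemma~\ref{lemma-basic}(3) it therefore suffices to compute $\sw(D)$, $\sw(E)$ and $\sw(D\otimes E)$. For a diagonal matrix, $W$ is the convex hull of the diagonal entries. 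Since $0\notin W(D)$, no entry of $D$ is zero, so all entries are unimodular. Using the phase invariance of Lemma~\ref{lemma-basic}(2), we may take them to be $e^{i\beta_k}$ with $0=\beta_1\le\dots\le\beta_n$ lying in an arc of length $<\pi$. I will then check directly that $\Gamma_D$ is exactly the set of $\theta$ with $\cos(\theta+\beta_k)\ge 0$ for all $k$. This is an interval of length $\pi-\beta_n$, so $\sw(D)=\beta_n$, the angular width of the spectrum. Similarly the entries of $E$ are $e^{i\gamma_l}$ with $0=\gamma_1\le\dots\le\gamma_m=\sw(E)$.

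\textbf{The product and the main obstacle.} The entries of $D\otimes E$ are $e^{i(\beta_k+\gamma_l)}$, with angles in $[0,\beta_n+\gamma_m]$ and the endpoints $0$ and $\beta_n+\gamma_m$ both attained.
\begin{itemize}
\item If $\beta_n+\gamma_m<\pi$, the same arc computation gives $\sw(D\otimes E)=\beta_n+\gamma_m=\sw(A)+\sw(B)$.
\item If $\beta_n+\gamma_m\ge\pi$, I need to show $0\in\conv\{e^{i(\beta_k+\gamma_l)}\}$. Use the four points with angles $0,\ \beta_n,\ \gamma_m,\ \beta_n+\gamma_m$. The consecutive angular gaps around the circle are each at most $\max(\beta_n,\gamma_m)<\pi$, together with $2\pi-(\beta_n+\gamma_m)\le\pi$. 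Thus no open half-plane through the origin contains all four points, so $0$ lies in their convex hull, and hence $\sw(A\otimes B)=\pi$.
\end{itemize}

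I expect this last gap-counting step to be the main point needing care. In particular the boundary case $\beta_n+\gamma_m=\pi$ must be handled: there the points $1$ and $-1$ both appear, so $0$ lies on the segment joining them. The other delicate point is confirming that the identification $\sw(D)=\beta_n$ really agrees with the definition via $|\Gamma_D|$; this is a routine check that each condition $\Re(e^{i\theta}e^{i\beta_k})\ge 0$ cuts out a half-circle of $\theta$'s.
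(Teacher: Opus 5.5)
Your proof is correct and follows essentially the same route as the paper. You handle $0\in W(A)$ or $0\in W(B)$ via a product state, reduce by Lemma~\ref{lemma-no-zero} and $*$-congruence invariance to unimodular diagonal matrices whose width is the angular span of their entries, and, when $\sw(A)+\sw(B)\ge\pi$, show that $0$ lies in the convex hull of a few entries of $D\otimes E$. The differences are cosmetic: the paper uses the three points $1, e^{i\sw(A)}, e^{i(\sw(A)+\sw(B))}$ and places the boundary case $\sw(A)+\sw(B)=\pi$ in its first branch, while you use four points with a gap argument and explicitly verify $\sw(D)=\beta_n$, which the paper leaves implicit.
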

\begin{proof}
    If $0 \in W(A)$ or $0 \in W(B)$, then $0 \in W(A \otimes B)$, so $\sw(A \otimes B) = \pi$ and $\sw(A) + \sw(B) \ge \pi$. Now, let us assume that $0 \not\in W(A)$ and $0 \not\in W(B)$. From Lemma~\ref{lemma-no-zero} we can write $A=G_AD_AG_A^\dagger$ and $B= G_BD_BG_B^\dagger$. From Lemma~\ref{lemma-basic} without loss of generality, we can assume that the elements of $D_A$ are $e^{i \alpha_{k}}$, where $\alpha_{1}=0 \le \ldots \le \alpha_{n} = \sw(A) < \pi$. The same is true for the elements $e^{i \beta_{l}}$ of $D_B$, where $\beta_{1}=0 \le \ldots \le \beta_{m} = \sw(B) < \pi$. Then, from Lemma~\ref{lemma-basic} we have $\sw(A\otimes B) = \sw(D_A \otimes D_B)$. If $\sw(A)+\sw(B) \le \pi$, then $\sw(D_A \otimes D_B) = \sw(A)+\sw(B)$. Otherwise, if $\sw(A)+\sw(B) > \pi$,  then $0 \in \conv\{1,e^{i \sw(A)}, e^{i (\sw(A)+\sw(B))}\}$, so $0 \in W(D_A \otimes D_B)$.
\end{proof}

\begin{proof} (of Theorem~\ref{thm-tensor})
    From Lemma~\ref{lem-sw-prod} we have
    \begin{equation}
      \sws(\A \otimes \B) \le \inf \{\sw(A \otimes B): A\in \A, B \in \B\} = \min(\sws(\A)+\sws(\B), \pi).
    \end{equation}
    To show the lower bound, we use the construction of $\rho_{\A \otimes \B}$ provided in the theorem statement.
    
     If $\sws(\A) = 0$ and $\sws(\B) = 0$, then $\sws(\A \otimes \B)=0$ and $\rho_{\A \otimes \B} = \Id_{nm}/(nm) = \rho_\A \otimes \rho_\B$. If $\sws(\A) > 0$ and $\sws(\B) = 0$ (similar for the reverse case), then we define $\rho = \rho_\A \otimes \rho_\B = e^{i \frac{\alpha_\A}{2}} P_\A \otimes \Id_m/m +  e^{-i \frac{\alpha_\A}{2}} Q_\A \otimes \Id_m/m$. Observe that $\rho \in \Sn{nm}(\alpha_\A)$ and $\tr(\rho (A\otimes B)) = 0$ for $A\in\A, B\in\B$. By Theorem~\ref{thm-tool} it means $\sws(\A \otimes \B) \ge \pi - \alpha_\A = \sws(\A)$. Therefore, $\sws(\A \otimes \B) = \sws(\A)$ and $\rho_{\A \otimes \B} = \rho$.

    For the next cases, let $\sws(\A) > 0$, $\sws(\B) > 0$. Combining Lemma~\ref{lem-prop-states} and Lemma~\ref{lemma4} we get $\swc(\rho_\A) = \alpha_\A$. In particular, $P_\A, Q_\A \neq 0$.
    
     If $\sws(\A) + \sws(\B) < \pi$ we take $\rho \propto e^{i\frac{\alpha_\A + \alpha_\B - \pi}{2}} P_\A \otimes P_\B + e^{-i\frac{\alpha_\A + \alpha_\B - \pi}{2}} Q_\A \otimes Q_\B$. Observe that $\rho \in \Sn{nm}(\alpha_\A + \alpha_\B - \pi)=\Sn{nm}(\pi - (\sws(\A)+\sws(\B)))$ as $\tr(P_\A)\tr(P_\B)+\tr(Q_\A)\tr(Q_\B) > 0$. By the assumptions
    \begin{equation}
        \begin{split}
            0&=\tr(\rho_\A A) = e^{i \frac{\alpha_\A}{2}} \tr(P_\A A) +  e^{-i \frac{\alpha_\A}{2}}\tr(Q_\A A),\\
            0&=\tr(\rho_\B B) = e^{i \frac{\alpha_\B}{2}} \tr(P_\B B) +  e^{-i \frac{\alpha_\B}{2}}\tr(Q_\B B),
        \end{split}
    \end{equation}
    where $A \in \A$ and $B \in \B$. From that we get $\tr(\rho (A \otimes B)) = 0$. By Theorem~\ref{thm-tool} it means $\sws(\A \otimes \B) \ge \pi - (\pi - (\sws(\A)+\sws(\B))) = \sws(\A)+\sws(\B)$. Therefore, $\sws(\A \otimes \B) = \sws(\A) + \sws(\B)$ and $\rho_{\A \otimes \B} = \rho$.

    If $\sws(\A) + \sws(\B) \ge \pi$ we take $\rho  \propto \rho_\A \otimes \rho_\B^\dagger + \rho_\A^\dagger \otimes \rho_\B$. Observe that $\rho=\rho^\dagger$. Moreover, $\Re W(\rho_\A \otimes \rho_\B^\dagger) \ge 0$, so by Lemma~\ref{lemma-basic} we have $\rho \psd$.
 Additionally, $\Re \tr(\rho_\A \otimes \rho_\B^\dagger) > 0$, so $\tr(\rho) > 0$. In summary, we have $\rho \in \Sn{nm}(0)$. By assumptions $\tr(\rho_\A A)=\tr(\rho_\B B)=0$ for $A \in \A$ and $B \in \B$. From that we get $\tr(\rho (A \otimes B)) = 0$. By Theorem~\ref{thm-tool} it means $\sws(\A \otimes \B) = \pi$ $\rho_{\A \otimes \B} = \rho$.
\end{proof}
 
\section{The proof of Corollary~\ref{cor-conj}}\label{app-proof-cor-conj}  
\begin{proof}
    The main tool in this proof is to evaluate the value of $\sws(S_\beta)$, where $S_\beta = \vspan(D_1, D_2)$ and $D_1 = \proj{0}+e^{i \beta} \proj{1}, D_2 = \proj{1}+e^{i \beta} \proj{2}$ for $\beta \in [\pi/2, \pi]$.

    For $\beta > \pi/2$ we define $\rho \in \Mn{3}$ given by
    \begin{equation}
        \rho \propto e^{i(\pi-\beta)} \diag(0,1,2 \cos(\pi - \beta))+e^{-i(\pi-\beta)} \diag(2 \cos(\pi - \beta),1, 0),
    \end{equation}
    where $\diag$ represents the diagonal matrix. Observe that $\rho \in \Sn{3}(2(\pi-\beta))$ and $\tr(\rho D_1) = \tr(\rho D_2) = 0$. By Theorem~\ref{thm-tool} we get $\sws(S_\beta) \ge 2\beta - \pi$.

    For $\beta < \pi$ we define $D = D_1 + e^{i\omega}D_2 = \diag(1, e^{i\beta}+e^{i\omega}, e^{i(\beta+\omega)})$, where $\omega = \beta+\pi+\epsilon$. Let us analyse $W(D)$ for $\epsilon \to_+ 0$. This numerical range is a triangle with vertices given by: $1, e^{i\beta}+e^{i\omega},e^{i(\beta+\omega)}$. Their arguments are, respectively, $0,\beta - \pi/2 + \epsilon'$ and $2\beta - \pi + \epsilon$, where $\epsilon' \to_+ 0$ with $\epsilon \to_+ 0$. It implies $\sw(D) = 2\beta - \pi + \epsilon$, and hence $\sws(S_\beta) \le 2\beta - \pi$.

    We have shown that $\sws(S_\beta) = 2\beta - \pi$ for $\beta \in [\pi/2, \pi]$. By Theorem~\ref{thm-tool} there is a
density operator in the orthogonal complement of $(S_\beta)^{\otimes N}$ if and only if $\sws((S_\beta)^{\otimes N})\ge \pi$. By Corollary~\ref{cor-many} it holds if and only if $N \sws(S_\beta) \ge \pi$, which is equivalent to $\beta \in [\pi/2 + \pi/(2N), \pi]$.

\end{proof}

\section{The proof of Theorem~\ref{thm-discr}}\label{app-proof-thm-discr}  
\begin{proof}
    The channel $\Phi$ and $\Psi$ are perfectly discriminable in the parallel scheme if and only if there exists $\rho \in \Sn{d^N}$ for some number $N$ of copies such that $\tr(\rho M) = 0$ for any $M \in \A^{\otimes N}$. By Theorem~\ref{thm-tool}, it is equivalent to condition $\sws(\A^{\otimes N})=\pi$ for some $N$. Using Corollary~\ref{cor-many} it follows that $N \sws(\A) \ge \pi$. For $\sws(\A) > 0$ the minimal number of copies has to satisfy this inequality.
\end{proof}

\bibliographystyle{ieeetr}
\bibliography{thebibliography}

\begin{thebibliography}{10}

\bibitem{duan2016parallel}
R.~Duan, C.~Guo, C.-K. Li, and Y.~Li, ``Parallel distinguishability of quantum
  operations,'' in {\em 2016 IEEE International Symposium on Information Theory
  (ISIT)}, pp.~2259--2263, IEEE, 2016.

\bibitem{helstrom1969quantum}
C.~W. Helstrom, ``Quantum detection and estimation theory,'' {\em Journal of
  Statistical Physics}, vol.~1, no.~2, pp.~231--252, 1969.

\bibitem{holevo1973bounds}
A.~S. Holevo, ``Bounds for the quantity of information transmitted by a quantum
  communication channel,'' {\em Problemy Peredachi Informatsii}, vol.~9, no.~3,
  pp.~3--11, 1973.
\newblock English translation: Problems of Information Transmission, vol.~9,
  no.~3, pp.~177--183, 1973.

\bibitem{acin2001statistical}
A.~Ac{\'\i}n, ``Statistical distinguishability between unitary operations,''
  {\em Physical Review Letters}, vol.~87, no.~17, p.~177901, 2001.

\bibitem{duan2007entanglement}
R.~Duan, Y.~Feng, and M.~Ying, ``Entanglement is not necessary for perfect
  discrimination between unitary operations,'' {\em Physical Review Letters},
  vol.~98, no.~10, p.~100503, 2007.

\bibitem{duan2008local}
R.~Duan, Y.~Feng, and M.~Ying, ``Local distinguishability of multipartite
  unitary operations,'' {\em Physical Review Letters}, vol.~100, no.~2,
  p.~020503, 2008.

\bibitem{puchala2018strategies}
Z.~Pucha{\l}a, {\L}.~Pawela, A.~Krawiec, and R.~Kukulski, ``Strategies for
  optimal single-shot discrimination of quantum measurements,'' {\em Physical
  Review A}, vol.~98, no.~4, p.~042103, 2018.

\bibitem{krawiec2020discrimination}
A.~Krawiec, {\L}.~Pawela, and Z.~Pucha{\l}a, ``Discrimination of {POVMs} with
  rank-one effects,'' {\em Quantum Information Processing}, vol.~19, no.~12,
  p.~428, 2020.

\bibitem{lewandowska2023strategies}
P.~Lewandowska, {\L}.~Pawela, and Z.~Pucha{\l}a, ``Strategies for single-shot
  discrimination of process matrices,'' {\em Scientific Reports}, vol.~13,
  no.~1, p.~3046, 2023.

\bibitem{chiribella2013identification}
G.~Chiribella, G.~M. D'Ariano, and M.~Roetteler, ``Identification of a
  reversible quantum gate: assessing the resources,'' {\em New Journal of
  Physics}, vol.~15, no.~10, p.~103019, 2013.

\bibitem{puchala2021multiple}
Z.~Pucha{\l}a, {\L}.~Pawela, A.~Krawiec, R.~Kukulski, and M.~Oszmaniec,
  ``Multiple-shot and unambiguous discrimination of von {N}eumann
  measurements,'' {\em Quantum}, vol.~5, p.~425, 2021.

\bibitem{duan2009perfect}
R.~Duan, Y.~Feng, and M.~Ying, ``Perfect distinguishability of quantum
  operations,'' {\em Physical Review Letters}, vol.~103, no.~21, p.~210501,
  2009.

\bibitem{watrous2018theory}
J.~Watrous, {\em The Theory of Quantum Information}.
\newblock Cambridge: Cambridge University Press, 2018.

\bibitem{harrow2010adaptive}
A.~W. Harrow, A.~Hassidim, D.~W. Leung, and J.~Watrous, ``Adaptive versus
  nonadaptive strategies for quantum channel discrimination,'' {\em Physical
  Review A}, vol.~81, no.~3, p.~032339, 2010.

\bibitem{cope2018adaptive}
T.~P.~W. Cope and S.~Pirandola, ``Adaptive estimation and discrimination of
  {H}olevo--{W}erner channels,'' {\em Quantum Measurements and Quantum
  Metrology}, vol.~4, no.~1, 2017.
\newblock arXiv:1801.05441.

\bibitem{pirandola2017ultimate}
S.~Pirandola and C.~Lupo, ``Ultimate precision of adaptive noise estimation,''
  {\em Physical Review Letters}, vol.~118, no.~10, p.~100502, 2017.

\bibitem{liu2019distinguishing}
S.~Liu, Y.~Li, and R.~Duan, ``Distinguishing unitary gates on the {IBM} quantum
  processor,'' {\em Science China Information Sciences}, vol.~62, no.~7,
  p.~72502, 2019.

\bibitem{debry2023experimental}
K.~DeBry, J.~Sinanan-Singh, C.~D. Bruzewicz, D.~Reens, M.~E. Kim, M.~P.
  Roychowdhury, R.~McConnell, I.~L. Chuang, and J.~Chiaverini, ``Experimental
  quantum channel discrimination using metastable states of a trapped ion,''
  {\em Physical Review Letters}, vol.~131, no.~17, p.~170602, 2023.

\bibitem{bilek2026experimental}
A.~B{\'\i}lek, J.~Hlisnikovsk{\'y}, T.~Bezd{\v{e}}k, R.~Kukulski, and
  P.~Lewandowska, ``Experimental study of multiple-shot unitary channels
  discrimination using the {IBM} {Q} computers,'' {\em Scientific Reports},
  vol.~16, no.~1, p.~6142, 2026.

\bibitem{li2019note}
C.-K. Li, Y.~Liu, C.~Ma, and D.~C.~P. Pelejo, ``A note on parallel
  distinguishability of two quantum operations,'' {\em Electronic Journal of
  Linear Algebra}, vol.~36, pp.~198--209, 2020.

\bibitem{vandenberghe1996sdp}
L.~Vandenberghe and S.~Boyd, ``Semidefinite programming,'' {\em SIAM Review},
  vol.~38, no.~1, pp.~49--95, 1996.

\bibitem{numericalshadow}
{\L}.~Pawela, P.~Gawron, J.~A. Miszczak, Z.~Pucha{\l}a, K.~{\.Z}yczkowski,
  P.~Lewandowska, and R.~Kukulski, ``Numerical shadow.''
  \url{https://numericalshadow.org/}.
\newblock Institute of Theoretical and Applied Informatics, Polish Academy of
  Sciences. Accessed: 2026-09-15.

\bibitem{Puchala2011}
Z.~Pucha{\l}a, P.~Gawron, J.~A. Miszczak, {\L}.~Skowronek, M.-D. Choi, and
  K.~{\.Z}yczkowski, ``Product numerical range in a space with tensor product
  structure,'' {\em Linear Algebra and its Applications}, vol.~434, no.~1,
  pp.~327--342, 2011.

\bibitem{Gawron2010}
P.~Gawron, Z.~Pucha{\l}a, J.~A. Miszczak, {\L}.~Skowronek, and
  K.~{\.Z}yczkowski, ``Restricted numerical range: A versatile tool in the
  theory of quantum information,'' {\em Journal of Mathematical Physics},
  vol.~51, no.~10, p.~102204, 2010.

\bibitem{li2020joint}
C.-K. Li, Y.-T. Poon, and Y.-S. Wang, ``Joint numerical ranges and
  commutativity of matrices,'' {\em Journal of Mathematical Analysis and
  Applications}, vol.~491, no.~1, p.~124310, 2020.

\bibitem{plaumann2021kippenhahn}
D.~Plaumann, R.~Sinn, and S.~Weis, ``{K}ippenhahn's theorem for joint numerical
  ranges and quantum states,'' {\em SIAM Journal on Applied Algebra and
  Geometry}, vol.~5, no.~1, pp.~86--113, 2021.

\bibitem{keeler1997numerical}
D.~S. Keeler, L.~Rodman, and I.~M. Spitkovsky, ``The numerical range of
  $3\times 3$ matrices,'' {\em Linear Algebra and its Applications}, vol.~252,
  no.~1--3, pp.~115--139, 1997.

\bibitem{szymanski2018classification}
K.~Szyma{\'n}ski, S.~Weis, and K.~{\.Z}yczkowski, ``Classification of joint
  numerical ranges of three {H}ermitian matrices of size three,'' {\em Linear
  Algebra and its Applications}, vol.~545, pp.~148--173, 2018.

\bibitem{horn2006canonical}
R.~A. Horn and V.~V. Sergeichuk, ``Canonical forms for complex matrix
  congruence and {$*$}congruence,'' {\em Linear Algebra and its Applications},
  vol.~416, no.~2--3, pp.~1010--1032, 2006.

\end{thebibliography}

\end{document}